\newlength\figureheight 
\newlength\figurewidth 
\pgfplotsset{compat=newest}
\pgfplotsset{plot coordinates/math parser=false}
\newtheoremstyle{specialcasestyle}{1mm}{1mm}{\upshape}{}{\bfseries\upshape}{.}{0mm}{}
\theoremstyle{specialcasestyle}
\newtheorem{prop}{Proposition}
\begin{document}

\title{An Accurate Sample Rejection Estimator for the Estimation of Outage Probability of EGC Receivers}

\author{Nadhir Ben Rached$^{1}$, Abla Kammoun$^2$, Mohamed-Slim~Alouini$^2$, and Ra\'ul Tempone$^{2,3}$
\\
\thanks{\vspace{-0.2in}\hrule \vspace{0.2cm}
A part of this work has been accepted for publication in IEEE Global Communications Conference (Globecom 2018), Abu Dhabi, UAE, Dec. 2018.

This work was supported by the KAUST Office of Sponsored Research (OSR) under Award No. URF/1/2584-01-01 and the Alexander von Humboldt Foundation.

$^1$ Chair of Mathematics for Uncertainty Quantification, Department of Mathematics, RWTH Aachen University, 52062 Aachen, Germany.

$^2$ Computer, Electrical and Mathematical Sciences \& Engineering Division (CEMSE), King Abdullah University of Science and Technology (KAUST),  23955-6900 Thuwal, Saudi Arabia.

$^3$ Alexander von Humboldt Professor in Mathematics for Uncertainty Quantification, RWTH Aachen University, 52062 Aachen, Germany.

}

}
\date{}
\maketitle
\thispagestyle{empty}
\begin{abstract}
In this work, we evaluate the outage probability (OP) for $L-$branch equal gain combining (EGC) diversity receivers operating over fading channels, i.e. equivalently the cumulative distribution function (CDF) of the sum of the $L$ channel envelopes. In general, closed form expressions of OP values are unobtainable. The use of Monte Carlo (MC) simulations is not considered a good alternative as it requires a large number of samples for small values of OP, making MC simulations very expensive. In this paper, we use the concept of importance sampling (IS), being known to yield accurate estimates using fewer simulation runs. Our proposed IS scheme is essentially based on sample rejection where the IS probability density function (PDF) is the truncation of the underlying PDF over the $L$ dimensional sphere. It assumes the knowledge of the CDF of the sum of the $L$ channel gains in a closed-form expression. Such an assumption is not restrictive since it holds for various challenging fading models. We apply our approach to the case of independent Rayleigh, correlated Rayleigh, and independent and identically distributed Rice fading models. Next, we extend our approach to the interesting scenario of generalised selection combining receivers combined with EGC under the independent Rayleigh fading environment. For each case, we prove the desired bounded relative error property. Finally, we validate these theoretical results through some selected experiments. 
\end{abstract}

\begin{IEEEkeywords}
Outage probability, equal gain combining, importance sampling, sample rejection, generalised selection combining, bounded relative error.
\end{IEEEkeywords}
\section{Introduction}
Sums of random variables (RVs) occur in many challenging wireless communication applications. For instance, the instantaneous signal-to-noise-ratio (SNR) expressions at the output of equal gain combining (EGC) and maximum ratio combining (MRC) diversity receivers involve sums of RVs \cite{alouini}. Therefore, the evaluation of outage probability (OP) values turns out to be equivalent to computing the cumulative distribution function (CDF) of fading channel envelopes for EGC and of channel gains for MRC \cite{5671659}. Sums of RVs play a central role when the generalised selection combining (GSC) scheme is combined with either EGC or MRC techniques \cite{8472928}. In such cases, the expressions of the OP are given by the CDFs of sums of ordered channel amplitudes for GSC/EGC or channel gains for GSC/MRC. 

Except for the CDF of the sum of two Rayleigh distributions  \cite{1244789}, closed-form expressions of the CDF of the sum of fading channel envelopes have not yet been derived in the literature. To address this knowledge gap, various approximation methods have been proposed. For example, closed-form approximations have been developed for the case of independent Rician fading RVs \cite{1388730,4781943,1421185}. In \cite{1388722}, a simple approximate expression of the CDF of Rayleigh sums was derived. Approximations of the sum of $\kappa-\mu$ and $\eta-\mu$ distributions have also been considered in \cite{4939219}. An extensive interest was devoted to the case of the sum of Log-normal RVs for which various approximation methods have been proposed \cite{4275022,1275712,4814351,1097606,citeulike:7151841}. 

Generally, the accuracy of these closed-form approximations is not always ensured and may degrade for a certain choice of systems parameters. Therefore, alternative approaches are of important practical interest. The Monte Carlo (MC) method presents one alternative method. However, this method requires substantial computational effort when small values of the CDF are considered, thus making this method impractical. To avoid this, variance reduction techniques are used extensively in the context of rare events simulations \cite{rubino2009rare,opac-b1132466}. Importance sampling (IS) is the most popular variance reduction technique and is known, when used appropriately, to yield a very accurate estimate of OP with a fewer number of runs.

There are numerous examples in the literature on the estimation of tail probabilities of sums of RVs using the IS approach. However, few works have been developed on the probability that a sum of RVs is less than a sufficiently small threshold, as we propose here.. For instance, in the Log-normal fading environment, an exponential twisting approach has been proposed in \cite{asmussen2014exponential} to deal with the CDF of independent and identically distributed (i.i.d) sum of Log-normal variates. The correlated Log-normal case has also been considered in \cite{gulisashvili2016,botev_SLN,Nadhir_SLN}. Efficient IS schemes have been developed to estimate the CDF of the sum of Gamma-Gamma \cite{7835220} and $\kappa-\mu$, $\eta-\mu$ and $\alpha-\mu$ \cite{8125718} RVs. In \cite{7328688}, two unified IS approaches have been proposed to estimate OP values over a generalised fading framework using the well-known hazard rate twisting technique \cite{Juneja:2002:SHT:566392.566394,BenRached2016}. Finally, IS and conditional MC (another popular variance reduction technique) estimators have been proposed in \cite{8472928} to estimate the CDF of partial sums of ordered independent RVs that are useful to estimate OP values for GSC/EGC or GSC/MRC receivers.

Contrary to the evaluation of OP under the EGC diversity model, closed-form expressions of OP at the output of MRC diversity receivers are available for many challenging fading environments. This is the case for independent but not necessarily identically distributed (i.n.i.d) Rayleigh fading channels where the expression of OP at the output of MRC receivers is the CDF of the sum of i.n.i.d exponential RVs which is given in \cite{Botev:2013:SNR:2466677.2466683}. The same observation holds for the correlated Rayleigh case \cite{1201072}. The i.i.d $\kappa-\mu$ and $\eta-\mu$ fading models are other examples where the values of OP with the MRC scheme are given respectively by the CDF of the squared $\kappa-\mu$ and squared $\eta-\mu$ variates \cite{4231253}.  A further interesting example is when GSC is combined with MRC under the i.n.i.d Rayleigh fading channels. The OP expression, which is given in this case by the CDF of sums of ordered i.n.i.d exponential variates, is given in closed-form \cite{1033001}. 

These observations provide the main motivation for our study. We propose an IS estimator of the OP at the output of EGC diversity receivers, i.e. the probability that the sum of fading channel envelopes (or the sum of ordered fading channel envelopes in the case of GSC/EGC receivers) falls below a given threshold, based on the knowledge of a closed-form expression of the OP with MRC scheme, i.e. the probability that the sum of channel gains (or the sum of ordered channel gains in the case of GSC/MRC receivers) is less than a certain threshold. More specifically, our proposed IS scheme is based on sample rejection where the biased probability density function (PDF) is given by the truncation of the underlying PDF over the multidimensional hypersphere with a radius equal to the specified threshold. As previously mentioned, assuming the knowledge of a closed-form expression of the OP with MRC scheme is not restrictive since this assumption holds for several practical fading models.
After we explain the general concept of the proposed estimator, we apply our approach to four interesting scenarios, namely the i.n.i.d Rayleigh, the correlated Rayleigh with exponential correlation, the i.i.d Rice, and the i.n.i.d Rayleigh when EGC is combined with GSC. We provide for each case a detailed procedure on how the proposed estimator is implemented and we prove that the bounded relative error property, which is one of the desired properties in the context of rare event simulations \cite{opac-b1132466}, is achieved. Note that in addition to its simplicity in implementation and analysis, the scope of applicability of our proposed IS estimator includes the sum of correlated Rayleigh RVs, which has not yet been considered by other existing approaches. Moreover, although an estimator of the CDF of the sum of i.i.d Rice variates has been developed in \cite{7328688}, it is not clear how sampling according to the biased PDF is performed. This constitutes another contribution of the present work where the CDF of the i.i.d sum of Rice variates is easily implemented. Finally, we compare the performance of our proposed IS estimator through various numerical results with some existing estimators as well as the naive MC sampler. 

The rest of the paper is organised as follows. In Section II, we present the problem setting and describe the main concept of IS. Section III is devoted to presenting the general idea of the proposed IS estimator. Moreover, we apply, in the same section, our IS estimator to four interesting scenarios. For each scenario, we provide a detailed implementation procedure and prove that the desired property of bounded relative error holds. Finally, a comparison of our estimator with some existing estimators as well as naive MC simulations is performed in Section IV. 

\section{Problem Setting}
The instantaneous SNR at the output of $L-$branch EGC diversity receiver is expressed as in \cite{7328688,5671659}
\begin{align}\label{eq_snr}
\gamma_{end}=\frac{E_s}{N_0 L} \left (\sum_{i=1}^{L}{R_i} \right )^2,
\end{align}
where $\frac{E_s}{N_0}$ is the SNR per symbol at the transmitter, $L$ is the number of diversity branches, and $R_i$ is the channel envelope (the fading channel amplitude) of the $i^{th}$ diversity branch. The OP, which is a widely used metric for performance analysis of wireless communication systems operating over fading channels, is defined as the probability that the SNR $\gamma_{end}$ is below a given threshold $\gamma_{th}$
\begin{align}
P_{out}=P\left ( \gamma_{end} \leq \gamma_{th} \right ),
\end{align} 
which is equivalent, using the SNR expression in (\ref{eq_snr}), to 
\begin{align}
P_{out}=P \left ( \sum_{i=1}^{L}{R_i} \leq \gamma_0\right ),
\end{align}
where $\gamma_0=\sqrt {\frac{\gamma_{th} L N_0}{E_s}}$. Thus, the problem is reduced to evaluating the CDF of the sum of fading envelopes (modulus of the fading channels) of the $L$ diversity branches. Unfortunately, this quantity is out of reach for many practical fading models. A non-exhaustive list includes, for instance, the Rayleigh fading environment where the CDF of the sum of correlated (or even independent) Rayleigh RVs is not known to have a closed-form expression. A similar observation also holds for the independent Rician, the $\kappa-\mu$, and the $\eta-\mu$ fading models. Note that when GSC is combined with EGC, the OP expression corresponds to the CDF of partial sums of ordered fading channel amplitudes, i.e. the CDF of the sum of the $N$ largest fading channel amplitudes with $1 \leq N \leq L$.

Naive MC simulations constitute a good alternative to estimate the CDF of the sum of fading envelopes. Let $f(\cdot)$ denote the joint PDF of the random vector containing the $L$ fading envelopes $\bold{R}=(R_1,R_2,\cdots,R_L)$. Then, using $M$ independent replicants $\{ \bold{R}^{(k)}\}_{k=1}^{M}$ of the random vector $\bold{R}$ sampled according to $f(\cdot)$, the naive MC estimator is defined as
\begin{align}
\hat {P}_{out,MC}=\frac{1}{M}\sum_{k=1}^{M}{\bold{1}_{ \left ( \sum_{i=1}^{L}{R_i^{(k)}} \leq \gamma_0\right )}},
\end{align}
where $\bold{1}_{ \left ( \cdot\right )}$ denotes the indicator function. However, the high computational complexity incurred by this method, in terms of required number of samples to ensure an accurate estimate, makes it impractical for sophisticated wireless communication systems where $P_{out}$ is sufficiently small. To illustrate such a point, the naive MC sampler requires a number of runs approximately equal to $100/P_{out}$ to estimate $P_{out}$ with a $20\%$ relative error. 

When appropriately used, IS can save a substantial amount of computational gain compared to naive MC simulations. The concept of IS is to rewrite $P_{out}=\mathbb{E}_f \left [\bold{1}_{\left (\sum_{i=1}^{L}{R_i}  \leq \gamma_0\right)} \right ]$, where $\mathbb{E}_f [\cdot]$ is the expectation with respect to the PDF $f(\cdot)$, as follows
\begin{align}
P_{out}=\mathbb{E}_g \left [ \bold{1}_{\left (\sum_{i=1}^{L}{R_i} \leq \gamma_0\right )} L(R_1,\cdots,R_L)\right ],
\end{align} 
where $g(\cdot)$ is a new PDF named as IS PDF or biased PDF and $\mathbb{E}_g [\cdot]$ denotes the expectation operator with respect to the PDF $g(\cdot)$. $L$ is the likelihood ratio defined as the ratio between the original and the new introduced PDFs
\begin{align}
L(R_1,\cdots,R_L)=\frac{f(R_1,\cdots,R_L)}{g(R_1,\cdots,R_L)}.
\end{align} 
Then, using $M$ samples $\{ \bold{R}^{(k)}\}_{k=1}^{M}$ of the random vector $\bold{R}$ sampled according to $g(\cdot)$, we construct the IS estimator as follows
\begin{align}
\hat {P}_{out,IS}=\frac{1}{M} \sum_{k=1}^{M}{ \bold{1}_{\left (\sum_{i=1}^{L}{R_i^{(k)}} \leq \gamma_0\right )} L(R_1^{(k)},\cdots,R_L^{(k)})}.
\end{align}
The remaining step is the  choice of  biased PDF $g(\cdot)$ that results in a variance reduction and hence in a computational gain with respect to naive MC simulations. Before that, it is necessary to define some performance metrics that serve to measure the goodness of an estimator. Among these criteria, we focus on the bounded relative error property \cite{opac-b1123521}. 
 We say that the estimator $\bold{1}_{\left (\sum_{i=1}^{L}{R_i} \leq \gamma_0\right )} L(R_1,\cdots,R_L)$ achieves the bounded relative error property when
\begin{align}
\limsup_{\gamma_0 \rightarrow 0} {\frac{\mathrm{var}_g \left [\bold{1}_{\left (\sum_{i=1}^{L}{R_i} \leq \gamma_0\right )} L(R_1,\cdots,R_L) \right ]}{P_{out}^2}} < + \infty.
\end{align}
This property has been used, for instance, in \cite{7328688} and implies that , when it holds, the number of samples needed to meet a certain accuracy requirement remains bounded regardless of how small $P_{out}$ is. Hence, it suffices to guarantee a substantial amount of computational gain over naive MC simulations.

\section{Sample Rejection IS Estimator}
Before presenting our choice of the biased PDF $g(\cdot)$, we describe the optimal IS density which is defined as the truncation of $f(\cdot)$ over the rare set $\{\sum_{i=1}^{L}{R_i} \leq \gamma_{0}\}$
\begin{align}
g^*(r_1,\cdots,r_L)=\frac{f(r_1,\cdots,r_L) \bold{1}_{\left (\sum_{i=1}^{L}{r_i} \leq \gamma_0 \right )}}{P_{out}}.
\end{align}
The above optimal IS density, known also as the zero variance measure, is impractical since it involves the unknown quantity $P_{out}$. However, this measure provides some insights on how the IS density may be selected in order to yield a substantial amount of variance reduction. In fact, the optimal IS density encourages samples that belong to the rare set and maintains over it  the likelihood ratio constant. To this end, we propose a biased PDF that is the truncation of the underlying PDF $f(\cdot)$ over a set $S$:
\begin{align}\label{biased_pdf}
g(r_1,\cdots,r_L)=\frac{f(r_1,\cdots,r_L) \bold{1}_{\left (\bold{R} \in S\right )}}{\tilde{P}_{out}},
\end{align}
where $S$ is a set that contains the set of interest $\{(r_1,\cdots,r_L),\sum_{i=1}^{L}{r_i} \leq \gamma_{0},r_i\geq 0 \}$ and $\tilde{P}_{out}$ is the probability that the random vector $\bold{R}$ is in $S$. Obviously, in order to be able to implement the proposed IS approach with the biased PDF above, the quantity $\tilde{P}_{out}$ must be known in  closed-form. 

Our choice of $S$ follows from the following observation. For many fading models with MRC receivers, the OP, which is given in this case by the CDF of the sum of squared fading envelopes, is known in a closed-form expression. This is the case for i.n.i.d Rayleigh and Nakagami-m fading envelopes in which the CDFs of the sum of channel gains, which correspond in this case to the CDFs of the sum of independent exponentials and Gamma RVs respectively, are known in closed-from expressions \cite{Botev:2013:SNR:2466677.2466683}\cite{6292935}. A similar observation can be deduced from the i.i.d $\kappa-\mu$ and $\eta-\mu$ fading channels since the sum of i.i.d squared $\kappa-\mu$ and $\eta-\mu$ is again a squared $\kappa-\mu$ and a squared $\eta-\mu$, respectively \cite{4231253}. Moreover, for the correlated Rayleigh fading channels, the CDF of the sum of correlated exponential RVs can be obtained explicitly \cite{1201072}. A further interesting example is for GSC/EGC receivers under i.n.i.d Rayleigh fading channels in which the CDF of the partial sum of ordered i.n.i.d exponential RVs can be shown to admit a closed-form expression \cite{1033001}. Therefore, the set $S$ is chosen as follows
\begin{align}\label{set}
S=\{(r_1,\cdots,r_L), \sum_{i=1}^{L}{r_i^2} \leq \gamma_0^2,r_i\geq 0\},
\end{align} 
and thus $\tilde{P}_{out}$ is the OP at the output of MRC receivers which is given by
\begin{align}
\tilde{P}_{out}=P \left (\sum_{i=1}^{L}{R_i^2} \leq \gamma_0^2 \right ).
\end{align}
In other words, based on the knowledge of a closed-form expression of the OP at the output of MRC receivers, we construct an IS estimator of OP values at the output of EGC diversity receivers. In the next section, we provide more details on the implementation of the above IS scheme for the case of i.n.i.d Rayleigh, correlated Rayleigh and i.i.d Rice fading channels. Furthermore, we extend our approach to the case of GSC/EGC receivers under the i.n.i.d Rayleigh fading channels. We perform for each case a theoretical study of the proposed estimator and show that it achieves the bounded relative error property. We note here that the considered scenarios are illustrations of our approach that can be applicable to other scenarios such as Nakagami-m, $\kappa-\mu$, and $\eta-\mu$ fading channels. 

The squared coefficient of variation, defined as the ratio between the variance of an estimator to its squared mean, of the proposed IS estimator is given by
\begin{align}
\frac{\mathrm{var}_g \left [ \bold{1}_{\left (\sum_{i=1}^{L}{R_i} \leq \gamma_0\right )} L(R_1,\cdots,R_L)\right ]}{P_{out}^2}=\frac{\tilde{P}_{out}}{P_{out}}-1.
\end{align}
Therefore, the closer $\tilde{P}_{out}$ is to $P_{out}$, the smaller the coefficient of variation is, and hence the more efficient the proposed estimator is. Particularly, the bounded relative error holds when $\tilde{P}_{out}/P_{out}$ is bounded for a sufficiently small threshold. 
\subsection{Independent Rayleigh Fading Channels}
We consider the first case study where $R_i$, $i=1,2,\cdots,L$, have i.n.i.d Rayleigh distributions. Hence, the PDF $f(\cdot)$ is given by 
\begin{align}
f(r_1,\cdots,r_L)=\prod_{i=1}^{L}{f_{R_i}(r_i)},
\end{align}
where the univariate PDF of $R_i$ is given by
\begin{align}\label{rayleigh}
f_{R_i}(r)=\frac{2r}{\Omega_i} \exp \left (-r^2/\Omega_i \right ), \hspace{4mm} r \geq 0.
\end{align}
Next, in order to apply our proposed IS approach, it is essential to provide a closed-form expression of the quantity $\tilde{P}_{out}$. This expression is obtained from \cite{Botev:2013:SNR:2466677.2466683,8472928} as follows
\begin{align}
\tilde{P}_{out}=1-(1,0,\cdots,0)\exp \left ( \gamma_0^2 \bold{A(\Omega)} \right ) (1,1,\cdots,1)',
\end{align}
with $\bold{\Omega}=(\Omega_1,\cdots,\Omega_L)^T$, $\exp \left ( \gamma_0^2 \bold{A(\Omega)} \right )$ denotes the matrix exponential of $\gamma_0^2 \bold{A(\Omega)}$ and 
\begin{align}
\bold{A(\Omega)} = 
 \begin{pmatrix}
  -1/\Omega_1 & 1/\Omega_1     & 0    & \cdots & 0 \\
    0       & -1/\Omega_2   & 1/\Omega_2 & \cdots & 0 \\
  \vdots  & \vdots  & \ddots & \ddots &\vdots  \\
    0 & \cdots & 0 & -1/\Omega_{N-1} &1/\Omega_{N-1} \\
  0 & \cdots & 0 & 0 &-1/\Omega_N 
 \end{pmatrix}
\end{align}
In the implementation of the proposed IS estimator, one has to be able to efficiently sample from the biased PDF $g(\cdot)$ given in (\ref{biased_pdf}), that is, the truncation of the underlying PDF $f(\cdot)$ over the set $S$ given in (\ref{set}). To do that, we denote by $G_i=R_i^2/\gamma_0^2$, $i=1,2,\cdots,L$, and thus our problem reduces to sampling $G_1,\cdots, G_L$ according to their underlying PDF truncated over the set $\{\sum_{i=1}^{L}{G_i} \leq 1\}$. To this end, we propose to use the acceptance-rejection technique with proposal PDF the uniform distribution over the unit simplex $\{\sum_{i=1}^{L}{G_i} \leq 1\}$. The whole procedure is described in Algorithm 1.
\begin{algorithm}[H]
\caption{Samples for the independent Rayleigh case}
\begin{algorithmic}[1]\label{Algo1}
\STATE \textbf{Inputs:} $\{\Omega_i \}_{i=1}^{L}$ and $\gamma_0$.
\STATE \textbf{Outputs:} $\{R_i \}_{i=1}^{L}$.
\WHILE {$U > \exp \left (-\gamma_0^2\sum_{i=1}^{N}{U_i/\Omega_i} \right )$} 
\STATE Generate $\{U_i\}_{i=1}^{N}$  from the uniform distribution over the set $\{u_i \geq 0, \sum_{i=1}^{N}{u_i}\leq 1\}$, see \cite[Algorithm 3.23]{opac-b1132466}.
\STATE Generate a sample $U$ from the uniform distribution over $[0,1]$.
\ENDWHILE
\STATE $\bold{G} \leftarrow \bold{U}$.
\STATE Set $R_i \leftarrow \gamma_0\sqrt{ G_i}$.
\end{algorithmic}
\end{algorithm}
We now provide a theoretical efficiency result of the proposed IS estimator. In fact, we show in the following proposition that it has a bounded relative error.
\begin{prop}
\hspace{2mm} In the case of independent Rayleigh fading channels, the proposed IS estimator of $P_{out}$ achieves the bounded relative error property, that is
\begin{align}
\limsup_{\gamma_0\rightarrow 0}{\frac{\tilde{P}_{out}}{P_{out}}} < \infty.
\end{align}
\end{prop}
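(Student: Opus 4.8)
The plan is to use the elementary inclusion $\{\sum_i r_i \le \gamma_0,\ r_i \ge 0\} \subseteq \{\sum_i r_i^2 \le \gamma_0^2,\ r_i \ge 0\}$ — indeed, if $r_i \ge 0$ and $\sum_i r_i \le \gamma_0$, then each $r_i \le \gamma_0$, so $r_i^2 \le \gamma_0 r_i$ and hence $\sum_i r_i^2 \le \gamma_0 \sum_i r_i \le \gamma_0^2$. This already gives $P_{out} \le \tilde{P}_{out}$, so the ratio is $\ge 1$ and only a matching upper bound as $\gamma_0 \to 0$ is needed. The guiding observation is that both events are confined to a shrinking neighbourhood of the origin, where the Rayleigh density $f_{R_i}(r)=\frac{2r}{\Omega_i}\exp(-r^2/\Omega_i)$ agrees, up to a factor tending to $1$, with its leading term $\frac{2r}{\Omega_i}$. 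Precisely, writing $\Omega_{\min}=\min_i \Omega_i$, on the cube $0 \le r_i \le \gamma_0$ one has $\frac{2r_i}{\Omega_i}\,e^{-\gamma_0^2/\Omega_{\min}} \le f_{R_i}(r_i) \le \frac{2r_i}{\Omega_i}$, and both the simplex $\{\sum_i r_i \le \gamma_0,\ r_i\ge 0\}$ and the positive octant of the ball $\{\sum_i r_i^2 \le \gamma_0^2,\ r_i\ge 0\}$ lie inside that cube.

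I would then sandwich the two probabilities by integrals of the product density $\prod_i \frac{2r_i}{\Omega_i}$. For the numerator, dropping the exponential factors (each $\le 1$) gives
\[
\tilde{P}_{out} \le \frac{2^L}{\prod_i \Omega_i}\int_{\{r_i \ge 0,\ \sum_i r_i^2 \le \gamma_0^2\}} \prod_{i=1}^{L} r_i \,\mathrm{d}r
= \frac{2^L}{\prod_i \Omega_i}\cdot \frac{\gamma_0^{2L}}{2^L\,L!},
\]
where, after the scaling $r_i = \gamma_0 t_i$ and the substitution $u_i = t_i^2$, the integral reduces to $\tfrac{1}{2^L}$ times the volume $\tfrac{1}{L!}$ of the unit simplex. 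For the denominator, on the simplex each $r_i \le \gamma_0$, so every exponential factor is $\ge e^{-\gamma_0^2/\Omega_{\min}}$, whence
\[
P_{out} \ge e^{-L\gamma_0^2/\Omega_{\min}}\,\frac{2^L}{\prod_i \Omega_i}\int_{\{r_i \ge 0,\ \sum_i r_i \le \gamma_0\}} \prod_{i=1}^{L} r_i \,\mathrm{d}r
= e^{-L\gamma_0^2/\Omega_{\min}}\,\frac{2^L}{\prod_i \Omega_i}\cdot \frac{\gamma_0^{2L}}{(2L)!},
\]
using the Dirichlet integral $\int_{\{s_i \ge 0,\ \sum_i s_i \le 1\}} \prod_i s_i\,\mathrm{d}s = \prod_i\Gamma(2)/\Gamma(2L+1) = 1/(2L)!$ (with $s_i = r_i/\gamma_0$).

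Dividing the two displays, the common factors $\gamma_0^{2L}$, $\prod_i \Omega_i$ and $2^L$ cancel, leaving
\[
\frac{\tilde{P}_{out}}{P_{out}} \le \frac{(2L)!}{2^L\,L!}\, e^{L\gamma_0^2/\Omega_{\min}},
\]
and letting $\gamma_0 \to 0$ gives $\limsup_{\gamma_0 \to 0}\tilde{P}_{out}/P_{out} \le (2L)!/(2^L L!) < \infty$; combined with the trivial lower bound $\tilde{P}_{out}/P_{out}\ge 1$ this establishes the bounded relative error. I expect the only nonroutine ingredient to be the evaluation of the two Dirichlet/Liouville integrals (a change of variables to the standard simplex, respectively to the octant of the unit ball, followed by the Beta-function identity); the density sandwiching and the final algebra are elementary. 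As a consistency check one could instead read off the same leading-order behaviour of $\tilde{P}_{out}$ directly from the expansion $\exp(\gamma_0^2 \bold{A(\Omega)}) = I + \gamma_0^2 \bold{A(\Omega)} + \tfrac12\gamma_0^4 \bold{A(\Omega)}^2 + \cdots$, but the integral comparison above is cleaner and exposes the limiting constant explicitly.
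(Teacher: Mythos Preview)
Your proof is correct but follows a genuinely different route from the paper's. The paper bounds the two probabilities by \emph{box events}: it uses $\tilde{P}_{out}\le P(R_1\le\gamma_0,\ldots,R_L\le\gamma_0)=\prod_i(1-e^{-\gamma_0^2/\Omega_i})$ and $P_{out}\ge P(R_1\le\gamma_0/L,\ldots,R_L\le\gamma_0/L)=\prod_i(1-e^{-\gamma_0^2/(L^2\Omega_i)})$, then lets $\gamma_0\to 0$ via $1-e^{-x}\sim x$ to obtain the bound $L^{2L}$. You instead sandwich the joint density by its leading monomial $\prod_i (2r_i/\Omega_i)$ on the cube $[0,\gamma_0]^L$ and evaluate the resulting Dirichlet/Liouville integrals over the ball-octant and the simplex directly. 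Your approach requires a bit more computation (the two integral identities) but delivers a sharper limiting constant, $(2L)!/(2^L L!)=(2L-1)!!$, which for $L=4$ is $105$ versus the paper's $4^{8}=65536$; the paper's argument, on the other hand, needs only the one-dimensional Rayleigh CDF and a Taylor expansion, so it is slightly more elementary and transfers verbatim to the later propositions where only marginal CDF asymptotics are available.
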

\begin{proof}
We first upper bound the quantity $\tilde{P}_{out}$ as follows
\begin{align}
\nonumber \tilde{P}_{out}&= P \left ( \sum_{i=1}^{L}{R_i^2} \leq \gamma_0^2\right ) \\
\nonumber & \leq P \left (R_1 \leq \gamma_0, \cdots, R_L \leq \gamma_0 \right )\\
&=\prod_{i=1}^{L}{ \left (1-\exp \left (-\gamma_0^2/\Omega_i \right ) \right )}.
\end{align}
Then, we lower bound $P_{out}$
\begin{align}
P_{out} &= P \left (\sum_{i=1}^{L}{R_i}  \leq \gamma_0\right )\\
\nonumber & \geq P \left ( R_1 \leq \gamma_0/L, \cdots, R_L \leq \gamma_0/L\right )\\
&= \prod_{i=1}^{L}{ \left (1-\exp \left ( -\gamma_0^2/L^2\Omega_i\right ) \right )}.
\end{align}
Therefore, we obtain the following result
\begin{align}
\frac{\tilde{P}_{out}}{P_{out}} \leq  \frac{\prod_{i=1}^{L}{ \left (1-\exp \left (-\gamma_0^2/\Omega_i \right ) \right )}}{\prod_{i=1}^{L}{ \left (1-\exp \left ( -\gamma_0^2/L^2\Omega_i\right ) \right )}}.
\end{align}
Applying the limit superior on both side, it follows
\begin{align}
\limsup_{\gamma_0 \rightarrow 0} {\frac{\tilde{P}_{out}}{P_{out}}} \leq L^{2L},
\end{align}
and hence the proof is concluded.
\end{proof}
\subsection{Correlated Rayleigh Fading Channels}
Here we consider the case where the Rayleigh fading channels are correlated. The correlation model that we adopt is presented in \cite{1201072}, where the correlated Rayleigh RVs are generated from the correlated Gaussian RVs.  More specifically, we consider two $L$ dimensional Gaussian random vectors $\bold{X}$ and $\bold{Y}$ with zero means and same covariance matrices $\bold{\Sigma}$. We assume for simplicity that $\mathbb{E} \left [ \bold{X} \bold{Y}^{T}\right ]=0$ (the cross covariance matrix is zero). We define the random vector $\bold{R}$ as follows
\begin{align}
R_i=\sqrt{X_i^2+Y_i^2}, \hspace{2mm} i=1,\cdots,L.
\end{align}
Thus, we can see that $\bold{R}$ is a multivariate Rayleigh random vector with correlated components. We settle for a particular structure of the covariance matrix $\bold{\Sigma}$. In fact, we assume that $\bold{\Sigma}$ is a matrix of exponential correlations, that is
\begin{align}
\Sigma_{ij} = \begin{cases} \sigma^2, &\mbox{if } i = j \\
\rho^{|i-j|}\sigma^2, & \mbox{if } i\neq j \end{cases}. 
\end{align}
With this structure of the covariance matrix, the multivariate Rayleigh PDF is given by \cite{1201072}
\begin{align}
\nonumber &f(r_1,\cdots,r_L)=\frac{\prod_{i=1}^{L}{r_i}}{\sigma^{2L}(1-\rho^2)^{L-1}} \\
\nonumber & \times \exp \left ( -\frac{1}{2(1-\rho^2)\sigma^2} \left [r_1^2+r_L^2+(1+\rho^2)\sum_{i=2}^{L-1}{r_i^2} \right ]\right )\\
&\times \prod_{i=1}^{L-1}{I_0 \left (\frac{\rho}{(1-\rho^2)\sigma^2}r_ir_{i+1} \right )}, \hspace{4mm}r_1,r_2,\cdots,r_L \geq 0,
\end{align}
where $I_0(\cdot)$ denotes the zero order modified Bessel function of the first kind \cite{gradshteyn2007}. Now, we aim to obtain a closed-form expression of $\tilde{P}_{out}$. In our settings, it was proven in \cite[Eq.104]{1201072} that the moment generating function of $\sum_{i=1}^{L}{R_i^2}$ is given by
\begin{align}
M_{\sum_{i=1}^{L}{R_i^2}}(s)=\frac{1}{\prod_{i=1}^{L}{(1-2s\lambda_i)}},  \hspace{4mm} s <\frac{1}{2 \lambda_i} \text{  for all  }i
\end{align}
where $\lambda_i$, $i=1,\cdots,L$, are the eigenvalues of the Gaussian covariance matrix $\Sigma$.  Therefore, we deduce that $\sum_{i=1}^{L}{R_i^2}$ has the same distribution as the sum of $L$ independent exponential RVs with means $2\lambda_i$, $i=1,2,\cdots,L$. Hence,  the quantity $\tilde{P}_{out}$ is expressed as
\begin{align}
\tilde{P}_{out}=1-(1,0,\cdots,0)\exp \left ( \gamma_0^2 \bold{A(\boldsymbol{2\lambda})} \right ) (1,1,\cdots,1)',
\end{align}
with $\boldsymbol{\lambda}=(\lambda_1,\cdots,\lambda_L)^T$. The remaining step is then to provide an algorithm in order to sample from the biased PDF $g(\cdot)$. To do that, we proceed as in the previous example by applying the acceptance-rejection technique with a uniform distribution over the unit simplex $\{\sum_{i=1}^{L}{G_i} \leq 1 \}$ as a proposal. The following algorithm provides the necessary details to perform the sampling.
\begin{algorithm}[H]
\caption{Samples for the correlated Rayleigh case}
\begin{algorithmic}[1]\label{Algo1}
\STATE \textbf{Inputs:} $\sigma$, $\rho$ and $\gamma_0$.
\STATE \textbf{Outputs:} $\{R_i \}_{i=1}^{L}$.
\WHILE {$U > \exp \left (-\frac{\gamma_0^2 \left [ U_1+U_L+(1+\rho^2)\sum_{i=2}^{L-1}{U_i}\right ]}{2(1-\rho^2)\sigma^2}  \right ) 
\newline \times \prod_{i=1}^{L-1}{\frac{I_0 \left (\frac{\rho \gamma_0^2 \sqrt{U_iU_{i+1}}}{(1-\rho^2)\sigma^2} \right )}{I_0 \left (\frac{\rho \gamma_0^2}{(1-\rho^2)\sigma^2} \right )}}$}
\STATE Generate $\{U_i\}_{i=1}^{N}$  from the uniform distribution over the set $\{u_i \geq 0, \sum_{i=1}^{N}{u_i}\leq 1\}$.
\STATE Generate a sample $U$ from the uniform distribution over $[0,1]$.
\ENDWHILE
\STATE $\bold{G} \leftarrow \bold{U}$.
\STATE Set $R_i \leftarrow \gamma_0\sqrt{ G_i}$.
\end{algorithmic}
\end{algorithm}
Next, we study the efficiency of the proposed estimator and investigate whether the bounded relative error property holds for this scenario as well. 
\begin{prop}
\hspace{2mm} In the case of correlated Rayleigh fading channels, the proposed IS estimator of $P_{out}$ achieves the bounded relative error property
\begin{align}
\limsup_{\gamma_0 \rightarrow 0} \frac{\tilde{P}_{out}}{P_{out}} <\infty 
\end{align}
\end{prop}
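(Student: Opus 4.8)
The plan is to reuse, essentially verbatim, the argument used for the independent Rayleigh case, the only new ingredient being that the joint density $f$ is now the explicit multivariate Rayleigh PDF displayed above rather than a product. By the squared-coefficient-of-variation identity stated earlier in this section, it is enough to show that $\tilde{P}_{out}/P_{out}$ stays bounded as $\gamma_0\to 0$.

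First I would sandwich both probabilities between integrals over cubes, using the same two set inclusions as before: $\{(r_1,\dots,r_L):\sum_i r_i^2\le\gamma_0^2,\ r_i\ge 0\}\subseteq[0,\gamma_0]^L$ and $[0,\gamma_0/L]^L\subseteq\{(r_1,\dots,r_L):\sum_i r_i\le\gamma_0,\ r_i\ge 0\}$, which give $\tilde{P}_{out}\le\int_{[0,\gamma_0]^L}f\,dr$ and $P_{out}\ge\int_{[0,\gamma_0/L]^L}f\,dr$. Then I would bound $f$ on each cube using two elementary facts: on the relevant cube the exponential factor of $f$ lies between some $e^{-c\gamma_0^2}$ and $1$, and each Bessel factor $I_0\bigl(\tfrac{\rho}{(1-\rho^2)\sigma^2}r_ir_{i+1}\bigr)$ lies between $1$ (because $I_0\ge 1$) and $I_0\bigl(\tfrac{\rho\gamma_0^2}{(1-\rho^2)\sigma^2}\bigr)$ (because $I_0$ is nondecreasing on $[0,\infty)$). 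What then remains of $f$ is the monomial $\prod_i r_i$ divided by $\sigma^{2L}(1-\rho^2)^{L-1}$, whose integral over $[0,a]^L$ equals $(a^2/2)^L/\bigl(\sigma^{2L}(1-\rho^2)^{L-1}\bigr)$. Carrying these through, the upper bound for $\tilde{P}_{out}$ acquires a factor $(\gamma_0^2/2)^L$ times $\bigl[I_0\bigl(\rho\gamma_0^2/((1-\rho^2)\sigma^2)\bigr)\bigr]^{L-1}$, while the lower bound for $P_{out}$ acquires $(\gamma_0^2/(2L^2))^L$ times $e^{-c\gamma_0^2/L^2}$ for a suitable constant $c>0$ depending only on $\rho$, $\sigma$, $L$.

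Dividing the two bounds, the powers of $\sigma$, the powers of $(1-\rho^2)$, and most of the powers of $\gamma_0^2$ cancel, leaving
\begin{align}
\frac{\tilde{P}_{out}}{P_{out}}\le L^{2L}\,e^{c\gamma_0^2/L^2}\,\bigl[I_0\bigl(\rho\gamma_0^2/((1-\rho^2)\sigma^2)\bigr)\bigr]^{L-1}.
\end{align}
Since $e^{c\gamma_0^2/L^2}\to 1$ and $I_0(0)=1$, taking $\limsup$ as $\gamma_0\to 0$ gives $\limsup_{\gamma_0\to 0}\tilde{P}_{out}/P_{out}\le L^{2L}<\infty$, which is precisely the asserted bounded relative error property. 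I do not anticipate a genuine obstacle here: the only slightly delicate point is bookkeeping of the exponential and Bessel prefactors, and both are harmless because they tend to $1$; the conceptual mechanism — comparing the mass of the $\ell^2$-ball with that of the $\ell^1$-simplex by nesting cubes whose side lengths differ by the factor $L$ — is identical to the independent case.
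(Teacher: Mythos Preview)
Your proof is correct and uses the same cube-sandwich inclusions as the paper's argument, but the execution at the final step differs. The paper, after writing
\[
\frac{\tilde P_{out}}{P_{out}}\le\frac{P(R_1\le\gamma_0,\dots,R_L\le\gamma_0)}{P(R_1\le\gamma_0/L,\dots,R_L\le\gamma_0/L)},
\]
simply invokes a cited asymptotic for the multivariate Rayleigh CDF, namely $P(R_1\le\gamma_0,\dots,R_L\le\gamma_0)\sim a\,\gamma_0^{2L}$ as $\gamma_0\to 0$, from which the ratio converges to $L^{2L}$. You instead derive the needed $\gamma_0^{2L}$ behaviour by hand: bounding the exponential factor between $e^{-c\gamma_0^2}$ and $1$, each $I_0$ factor between $1$ and $I_0(\rho\gamma_0^2/((1-\rho^2)\sigma^2))$, and integrating the residual monomial $\prod_i r_i$ over the cube. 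Both routes yield the identical bound $\limsup\le L^{2L}$. Your version is more elementary and self-contained (no external reference needed, and it exploits the explicit density already displayed in the paper), while the paper's version is terser and would continue to apply even if the joint density were not available in closed form, provided the cited CDF asymptotic holds.
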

\begin{proof}
We follow the same steps as in the proof of Proposition 1. In fact, we have 
\begin{align}\label{ineq1}
\frac{\tilde{P}{out}}{P_{out}} \leq \frac{P \left (R_1 \leq \gamma_0, \cdots, R_L \leq \gamma_0 \right )}{P \left (R_1 \leq \gamma_0/L,\cdots, R_L \leq \gamma_0/L \right )}.
\end{align}
Then, we use the following asymptotic result of the multivariate CDF of the Rayleigh random vector which is given in \cite{1221771}
\begin{align}
P \left (R_1 \leq \gamma_0,\cdots, R_L \leq \gamma_0 \right ) \sim a \gamma_0^{2L} \text{,   as  }\gamma_0 \rightarrow 0.
\end{align}
This result concludes the proof. 
\end{proof}
\subsection{i.i,d Rician Fading Channels}
Here, we explore the case where the $R_i$, $i=1,\cdots,L$, are i.i.d Rician fading channels with a common PDF 
\begin{align}
\nonumber f_{R_i}(r)&= \frac{2r(K+1)}{\Omega}\exp \left (-K-\frac{K+1}{\Omega}r^2 \right )\\
& \times I_0 \left (2r \sqrt{\frac{K(K+1)}{\Omega}} \right ), \hspace{2mm} r \geq 0,
\end{align}
where $K$ is the Rice factor and $\Omega=\mathbb{E} \left [R_i^2 \right ]$, for all $i\in \{1,2,\cdots, L \}$. 

In order to obtain an expression of $\tilde{P}_{out}$, we use the fact that the sum of i.i.d squared Rician (equivalently the sum of i.i.d non centered Chi squared RVs) is a squared $\kappa-\mu$ RV with parameters $\kappa=K$ and $\mu=L$ and average power equal to $\tilde{\Omega}=L \Omega$ \cite{4231253,4570452}. More precisely, the PDF of $\sum_{i=1}^{L}{R_i^2}$ is given by
\begin{align}
\nonumber f_{\sum_{i=1}^{L}{R_i^2}}(r)&=\frac{L(1+K)^{\frac{L+1}{2}} r^{\frac{L-1}{2}}}{\tilde{\Omega}^{\frac{L+1}{2}} K^{\frac{L-1}{2}} \exp(LK)} \exp \left ( -\frac{(1+K)Lr}{\tilde{\Omega}}\right )\\
&\times I_{L-1} \left (2L \sqrt{\frac{K(K+1)r}{\tilde{\Omega}}} \right ), \hspace{2mm} r\geq 0.
\end{align}
Therefore, the quantity $\tilde{P}_{out}$ is expressed as
\begin{align}
\tilde{P}_{out}=1-Q_L \left (\sqrt{2KL},\sqrt{\frac{2(K+1)L}{\tilde{\Omega}}} \gamma_0 \right ),
\end{align}
where $Q_{\mu}(\cdot,\cdot)$ is the generalized Marcum $Q$ function \cite{andras2011generalized}.  

Similarly to the previous cases, sampling according to the biased PDF $g(\cdot)$ is easily performed using the acceptance-rejection approach. 
\begin{algorithm}[H]
\caption{Samples for the i.i.d Rice case}
\begin{algorithmic}[1]\label{Algo1}
\STATE \textbf{Inputs:} $K$, $\Omega$ and $\gamma_0$.
\STATE \textbf{Outputs:} $\{R_i \}_{i=1}^{L}$.
\WHILE {$U > \exp \left ( -\frac{(K+1)}{\Omega} \gamma_0^2 \sum_{i=1}^{L}{G_i}\right )
\newline \prod_{i=1}^{L}{\frac{I_0 \left (2 \sqrt{\frac{K(K+1)\gamma_0^2 G_i}{\Omega}} \right )}{I_0 \left ( 2\sqrt{\frac{K(K+1)\gamma_0^2}{\Omega}}\right )}}$}
\STATE Generate $\{U_i\}_{i=1}^{N}$  from the uniform distribution over the set $\{u_i \geq 0, \sum_{i=1}^{N}{u_i}\leq 1\}$.
\STATE Generate a sample $U$ from the uniform distribution over $[0,1]$.
\ENDWHILE
\STATE $\bold{G} \leftarrow \bold{U}$.
\STATE Set $R_i \leftarrow \gamma_0\sqrt{ G_i}$.
\end{algorithmic}
\end{algorithm}
Next we show that the bounded relative error holds again for the case of i.i.d Rician fading channels. 
\begin{prop}
\hspace{2mm} In the case of i.i.d Rice fading channels, the proposed IS estimator of $P_{out}$ achieves the bounded relative error property
\begin{align}
\limsup_{\gamma_0 \rightarrow 0} {\frac{\tilde{P}_{out}}{P_{out}}} < \infty.
\end{align}
\end{prop}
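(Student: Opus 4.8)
The plan is to mimic the argument used for Propositions~1 and~2, exploiting the i.i.d.\ structure to reduce the problem to a one-dimensional tail estimate near the origin. First I would upper bound $\tilde{P}_{out}$ by enlarging the $\ell^2$-ball to a cube: since $\sum_{i=1}^{L} R_i^2 \le \gamma_0^2$ forces $R_i \le \gamma_0$ for every $i$, and the $R_i$ are i.i.d.,
\begin{align}
\tilde{P}_{out} \le P\left ( R_1 \le \gamma_0, \cdots, R_L \le \gamma_0 \right ) = \left [ F_R(\gamma_0) \right ]^L,
\end{align}
where $F_R$ denotes the common Rician CDF. Symmetrically, I would lower bound $P_{out}$ by shrinking the region: if $R_i \le \gamma_0/L$ for all $i$ then $\sum_{i=1}^{L} R_i \le \gamma_0$, hence $P_{out} \ge \left [ F_R(\gamma_0/L) \right ]^L$. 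Combining the two bounds gives
\begin{align}
\frac{\tilde{P}_{out}}{P_{out}} \le \left ( \frac{F_R(\gamma_0)}{F_R(\gamma_0/L)} \right )^L .
\end{align}

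The only substantive remaining step is to pin down the behaviour of $F_R(x)$ as $x \to 0$. I would note that in the Rician PDF the factor $\exp \left ( -(K+1)r^2/\Omega \right )$ tends to $1$ and the Bessel factor $I_0 \left ( 2r\sqrt{K(K+1)/\Omega} \right )$ tends to $I_0(0) = 1$, both continuously, so $f_{R_i}(r)/r \to 2(K+1)e^{-K}/\Omega$ as $r \to 0$. Integrating from $0$ to $x$ then yields the asymptotic equivalence
\begin{align}
F_R(x) \sim \frac{(K+1)e^{-K}}{\Omega}\, x^2, \qquad x \to 0
\end{align}
(equivalently, one can invoke the small-argument expansion of the generalized Marcum $Q$-function appearing in the closed form of $\tilde{P}_{out}$). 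Substituting this into the ratio, the leading constants cancel and $F_R(\gamma_0)/F_R(\gamma_0/L) \to \gamma_0^2 / (\gamma_0/L)^2 = L^2$, so that
\begin{align}
\limsup_{\gamma_0 \rightarrow 0} \frac{\tilde{P}_{out}}{P_{out}} \le L^{2L} < \infty,
\end{align}
which is the bounded relative error property, in view of the identity $\mathrm{var}_g[\cdot]/P_{out}^2 = \tilde{P}_{out}/P_{out} - 1$ established earlier.

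I do not expect any real obstacle here: the structure is essentially identical to the proof of Proposition~1, and the only extra care needed is the (routine) control of the modified Bessel function near the origin, which is harmless since $I_0$ is continuous with $I_0(0) = 1$. The mild point worth stating carefully is that the asymptotic equivalence for $F_R$ at $0$ is genuinely an equivalence (not merely an upper bound), so that the numerator and denominator in the ratio have exactly matching leading orders and the bound $L^{2L}$ is finite.
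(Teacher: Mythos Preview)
Your proof is correct and follows essentially the same route as the paper: both bound $\tilde{P}_{out}/P_{out}$ by $\left(F_R(\gamma_0)/F_R(\gamma_0/L)\right)^L$ via the cube inclusion and the i.i.d.\ structure, and then use the asymptotic $F_R(x)\sim (K+1)e^{-K}\Omega^{-1}x^2$ to obtain the bound $L^{2L}$. The only minor difference is that you derive this asymptotic directly from the PDF via $I_0(0)=1$, whereas the paper invokes the small-argument expansion of the Marcum $Q$-function from the literature.
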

\begin{proof} 
First, the CDF of the Rice fading envelope is given by
\begin{align}
P(R_i \leq \gamma_0)=1-Q_1(\sqrt{2K},\sqrt{\frac{2(K+1)}{\Omega}}\gamma_0).
\end{align}
Then, the proof is based on the following asymptotic which is obtained from \cite{andras2011generalized,7769235}
\begin{align}\label{asym}
P(R_i \leq \gamma_0) \sim \frac{(K+1)\exp \left ( -K\right ) }{\Omega} \gamma_0^2,\hspace{2mm} \gamma_0 \rightarrow 0.
\end{align}
In fact, similarly to the previous proofs, we have from (\ref{ineq1}) that
\begin{align}
\frac{\tilde{P}_{out}}{P_{out}} \leq \frac{\left (P \left (R_1 \leq \gamma_0 \right ) \right )^L}{\left ( P\left ( R_1 \leq \gamma_0/L\right ) \right )^L}.
\end{align}
Using the asymptotic expression in (\ref{asym}), it follows
\begin{align}
\limsup_{\gamma_0 \rightarrow 0} {\frac{\tilde{P}_{out}}{P_{out}}} \leq L^{2L},
\end{align}
and hence the proof is concluded. 
\end{proof}

\subsection{i.n.i.d Ordered Rayleigh RVs}
The fading channel amplitudes $R_i$, $i=1,\cdots,L$ are i.n.i.d Rayleigh with PDF given in (\ref{rayleigh}). In this section, we aim to efficiently estimate OP values when GSC is combined with EGC
\begin{align}\label{pout_ordered}
P_{out}=P \left ( \sum_{i=1}^{N}{R^{(i)}} \leq \gamma_0\right ),
\end{align}
where $N$ satisfies $1 \leq N \leq L$ and denotes the number of selected branches, and $R^{(i)}$ denotes the $i^{th}$ order statistic such that $R^{(1)} \geq R^{(2)}\geq \cdots \geq R^{(L)}$. Note that $\gamma_0$ is given in this case by $\sqrt{\gamma_{th} N N_0/E_s}$. There are few existing works that have computed the above probability when the RVs $R_i$, $i=1,\cdots,L$ are either exponentials or Gamma distributed \cite{5605378,7953495}. These results can help to compute OP values at the output of GSC/MRC receivers.
When GSC is combined with EGC, a competitor of the present work is in \cite{8472928} where the authors proposed two variance reduction techniques based on IS and conditional MC (another type of variance reduction technique). However, the conditional MC estimator described in \cite{8472928} is only applicable when the Rayleigh RVs are i.i.d. Moreover, the construction of the IS estimator in \cite{8472928} is based on a choice of $S$ given by $S=\{(r_1,\cdots,r_L), \max_{1\leq i \leq L}{r_i} \leq \gamma_0, r_i \geq 0\}$. Therefore, given that this choice contains our choice of  $S$ in (\ref{set}), we conclude that our proposed estimator  is more efficient than the IS estimator proposed in \cite{8472928}. We verify this conclusion in the numerical results section.

We now show how we can use our proposed IS approach for the present case as well. Let $h_i=R_i^2$, $i=1,\cdots L$,  be the channel gains which are i.n.i.d exponential RVs with means $\Omega_i$. Then, the quantity $\tilde{P}_{out}$ is given by the partial sum of the ordered exponential RVs
\begin{align}
\tilde{P}_{out}=P \left ( \sum_{i=1}^{N}{h^{(i)}} \leq \gamma_0^2\right ).
\end{align}
In order to compute $\tilde{P}_{out}$, we introduce the following RVs
\begin{align}\label{rep_order_inid}
X_i=h^{(i)}-h^{(i+1)}, \text{  } i=1,2,\cdots,L-1, \text{ } X_L=h^{(L)}
\end{align}
Thus, with this representation, we get
\begin{align}
\tilde{P}_{out}=P \left ( \sum_{i=1}^{L}{\alpha_i X_i} \leq \gamma_0^2\right ),
\end{align}
with 
\begin{align}
\alpha_i=
\begin{cases}
i, \text{   } i=1,\cdots,N\\
N, \text{   } i=N+1,\cdots,L.
\end{cases}
\end{align}
Moreover, it was shown in \cite{1033001} that the joint PDF of $\bold{X}=(X_1,\cdots,X_L)^t$  is  given as follows
\begin{align}
f_{\bold{X}}(x_1,\cdots,x_L)=\sum_{\substack{i_1,i_2,\cdots,i_L=1 \\ i_1\neq i_2\neq \cdots \neq i_L}}^{L}{\prod_{\ell=1}^{L}{\frac{1}{\Omega_{i_{\ell}}}} \exp \left (-x_{\ell} \sum_{k=1}^{\ell}{\frac{1}{\Omega_{i_{k}}}}\right )}.
\end{align}
Interestingly, we observe that while the components of $\bold{X}$ are dependent, their joint PDF is given by the sum of products of independent exponentials. Therefore, by using the formula of the CDF of the sum of independent exponentials, we easily obtain a closed-form expression of $\tilde{P}_{out}$: 
\begin{align}
\tilde{P}_{out}=\prod_{\ell=1}^{L}{\frac{1}{\Omega_{\ell}}}\sum_{\substack{i_1,i_2,\cdots,i_L=1 \\ i_1\neq i_2\neq \cdots \neq i_L}}^{L}{\left (\prod_{\ell=1}^{L}{\frac{1}{\sum_{k=1}^{\ell}{\frac{1} {\Omega_{i_k}}}}} \right )} \tilde{P}_{out,i_1,\cdots,i_L}
\end{align}
 $ \tilde{P}_{out,i_1,\cdots,i_L}=1- (1,0,\cdots,0) \exp \left ( \gamma_0^2 A(\boldsymbol{\tilde{\alpha}})\right ) (1,\cdots,1)^t$ and  $\tilde{\alpha}_{i}=\frac{\alpha_{i}}{\sum_{k=1}^{i}{\frac{1}{\Omega_{i_k}}}}$, $i=1,2,\cdots,L$.


Next, we show how sampling according to the biased PDF is performed. We exploit the representation (\ref{rep_order_inid}) and sample from $X_1, \cdots, X_L$ truncated over $\{ \sum_{i=1}^{L}{\alpha_i X_i} \leq \gamma_0^2 \}$. By letting $G_i=\alpha_i X_i/\gamma_0^2$, $i=1,\cdots,L$, we construct the following algorithm.

\begin{algorithm}[H]
\caption{Samples for the independent ordered Rayleigh}
\begin{algorithmic}[1]\label{Algo1}
\STATE \textbf{Inputs:} $\gamma_0$, and $\{\Omega_i\}_{i=1}^{L}$.
\STATE \textbf{Outputs:} $\{h^{(i)}\}_{i=1}^{N}$.
\STATE Sample a permutation $(i_1,\cdots,i_L)$ from the discrete distribution with probability $p(i_1, \cdots,i_L)=\frac{\tilde{P}_{out,i_1,\cdots,i_L}}{\tilde{P}_{out}}\prod_{\ell=1}^{L}{\frac{1}{\Omega_{i_{\ell}}\sum_{k=1}^{\ell}{\frac{1}{\Omega_{i_k}}}}}$
\WHILE {$U > \exp \left ( - \gamma_0^2\sum_{\ell=1}^{L}{\frac{U_{\ell}}{\alpha_{\ell}} \sum_{k=1}^{\ell}{\frac{1}{\Omega_{i_k}}}  }\right )$} 
\STATE Generate $\{U_i\}_{i=1}^{N}$  from the uniform distribution over the set $\{u_i \geq 0, \sum_{i=1}^{N}{u_i}\leq 1\}$.
\STATE Generate a sample $U$ from the uniform distribution over $[0,1]$.
\ENDWHILE
\STATE $\bold{G} \leftarrow \bold{U}$.
\STATE Set $X_i \leftarrow \gamma_0^2\ G_i/\alpha_i$.
\STATE Compute $\{ h^{(i)}\}_{i=1}^{N}$ from (\ref{rep_order_inid}).
\end{algorithmic}
\end{algorithm}
In this case, we can also show that the bounded relative error property holds.
\begin{prop}
\hspace{2mm} In the case of i.n.i.d  Rayleigh fading channels at the output of GSC/EGC receivers, the proposed IS estimator of $P_{out}$ achieves the bounded relative error property
\begin{align}
\limsup_{\gamma_0 \rightarrow 0} {\frac{\tilde{P}_{out}}{P_{out}}} < \infty.
\end{align}
\end{prop}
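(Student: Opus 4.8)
The plan is to reuse the sandwiching argument from the proof of Proposition~1: bound $\tilde{P}_{out}$ from above and $P_{out}$ from below by products of univariate Rayleigh CDFs whose ratio has a finite limit. The only genuinely new ingredient is translating the events involving the ordered statistics into events involving the original (unordered) i.n.i.d.\ Rayleigh amplitudes.

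First I would upper bound $\tilde{P}_{out}$. Writing $h_i=R_i^2$ and letting $h^{(1)}\ge h^{(2)}\ge\cdots\ge h^{(L)}$ be their order statistics, the event $\{\sum_{i=1}^N h^{(i)}\le\gamma_0^2\}$ forces $h^{(1)}\le\gamma_0^2$, i.e.\ $\max_{1\le i\le L}R_i\le\gamma_0$, which is the event $\{R_i\le\gamma_0 \text{ for all } i\}$. By independence,
\[
\tilde{P}_{out}\le P(R_1\le\gamma_0,\dots,R_L\le\gamma_0)=\prod_{i=1}^{L}\bigl(1-e^{-\gamma_0^2/\Omega_i}\bigr).
\]
Next I would lower bound $P_{out}$: if $R_i\le\gamma_0/N$ for every $i$, then $\sum_{i=1}^{N}R^{(i)}\le N\cdot(\gamma_0/N)=\gamma_0$, so $\{R_1\le\gamma_0/N,\dots,R_L\le\gamma_0/N\}\subseteq\{\sum_{i=1}^N R^{(i)}\le\gamma_0\}$ and hence
\[
P_{out}\ge\prod_{i=1}^{L}\bigl(1-e^{-\gamma_0^2/(N^2\Omega_i)}\bigr).
\]

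Combining the two bounds gives $\tilde{P}_{out}/P_{out}\le\prod_{i=1}^{L}\bigl(1-e^{-\gamma_0^2/\Omega_i}\bigr)/\bigl(1-e^{-\gamma_0^2/(N^2\Omega_i)}\bigr)$, and since $1-e^{-x}\sim x$ as $x\to0$ each of the $L$ factors tends to $N^2$, so $\limsup_{\gamma_0\to0}\tilde{P}_{out}/P_{out}\le N^{2L}<\infty$, which concludes the argument (one may also replace $N$ by $L$ to get the cruder bound $L^{2L}$).

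I do not expect a serious obstacle here: unlike the correlated Rayleigh case, which needed the asymptotic multivariate-CDF estimate of \cite{1221771}, the independence of the underlying $R_i$ makes both the upper and lower product bounds exact, and the only elementary observation to get right is that a small partial sum of the $N$ largest (squared) amplitudes forces the global maximum to be small, while simultaneously controlling all $L$ amplitudes controls any partial sum of the top $N$ of them. The representation \eqref{rep_order_inid} and the closed form for $\tilde{P}_{out}$ are not needed for the efficiency proof itself — only for implementation — so they can be left out of the proof entirely.
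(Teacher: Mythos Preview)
Your argument is correct and essentially identical to the paper's: the paper uses exactly the same upper bound $\tilde P_{out}\le P(h^{(1)}\le\gamma_0^2)=\prod_{i=1}^L(1-e^{-\gamma_0^2/\Omega_i})$ and the same lower bound $P_{out}\ge\prod_{i=1}^L(1-e^{-\gamma_0^2/(N^2\Omega_i)})$, obtaining the same limiting constant $N^{2L}$. The only cosmetic difference is that the paper writes the lower-bounding event as $\{R^{(1)}\le\gamma_0/N,\dots,R^{(N)}\le\gamma_0/N\}$ rather than $\{R_1\le\gamma_0/N,\dots,R_L\le\gamma_0/N\}$, but these are the same event.
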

\begin{proof}
First, we upper bound $\tilde{P}_{out}$ as follows
\begin{align}
\nonumber \tilde{P}_{out}&=P \left ( \sum_{i=1}^{N}{h^{(i)}} \leq \gamma_0^2\right )\\
& \leq P \left ( h^{(1)} \leq \gamma_0^2\right )= \prod_{i=1}^{L}{\left ( 1- \exp \left ( -\gamma_0^2/\Omega_i\right )\right )}.
\end{align}
On the other hand, we have
\begin{align}
\nonumber P_{out}=&P \left ( \sum_{i=1}^{N}{R^{(i)}} \leq \gamma_0\right )\\
\nonumber & \geq P \left (R^{(1)} \leq \gamma_0/N,\cdots, R^{(N)} \leq \gamma_0/N\right )\\
&=\prod_{i=1}^{L}{\left ( 1-\exp \left (-\frac{\gamma_0^2}{N^2\Omega_i} \right )\right )}.
\end{align}
Thus, we obtain
\begin{align}
\frac{\tilde{P}_{out}}{P_{out}} \leq N^{2L},
\end{align}
and hence the proof is concluded.
\end{proof}

\section{Simulation Results}
In this section, we present some simulations to illustrate our theoretical results. Furthermore, we study the efficiency of the proposed estimator with respect to other estimators including the naive MC one. Before showing the results, we define a performance metric that will serve as a measure of efficiency of an estimator. We define the relative error of the naive MC estimator as the relative half-width of its confidence interval 
\begin{align}\label{errmc}
\epsilon_{MC}=\frac{C \sqrt{P_{out}(1-P_{out})}}{P_{out} \sqrt{M}},
\end{align}
where $C$ is the confidence constant chosen to be equal to $1.96$ (corresponding to $ 95\%$ confidence level). The relative error of the proposed estimator is given using a similar argument by
\begin{align}\label{erris}
\epsilon_{IS}=\frac{C \sqrt{\frac{\tilde{P}_{out}}{P_{out}}-1}}{\sqrt{M}}.
\end{align}
We performed the comparison between different estimators in terms of the necessary number of simulation runs in order to meet a fixed accuracy requirement measured by the above quantities. More specifically, we set $\epsilon_{MC}$ and $\epsilon_{IS}$ equal to a fixed value and use (\ref{errmc}) and (\ref{erris}) to find the number of simulation runs needed to meet this fixed accuracy requirement. 

In the first experiment, we consider the i.i.d Rayleigh fading channels and we evaluate the OP under EGC using the proposed estimator as well the second estimator of \cite{7328688}, which is based on the use of the hazard rate twisting (HRT) technique. Then we investigate the efficiency of both estimators using the number of simulation runs required to meet a fixed accuracy level. The same steps are repeated for two other experiments; the correlated Rayleigh with exponential correlation and the ordered i.n.i.d Rayleigh scenarios. In the former experiment, we make the comparison with respect to the naive MC estimator since, to the best of our knowledge, this problem has not been investigated by existing estimators. In the latter case, i.e., in the ordered i.n.i.d Rayleigh case, we perform the comparison with the universal IS estimator of \cite{8472928}, as well as with the naive MC estimator.
\subsection{i.i.d Rayleigh Fading Channels}
In Fig. \ref{fig1}, we plot the estimated value of $P_{out}$ given by naive MC simulations, the HRT method and the proposed estimator for the case of i.i.d Rayleigh fading channels. The plot is a function of the threshold value $\gamma_{th}$ and for three different values of the number of diversity branches $L$. 

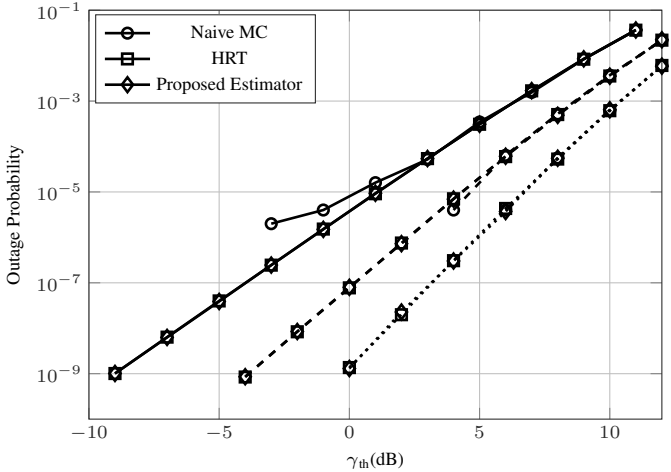
\begin{figure}[h]
\centering
\setlength\figureheight{0.30\textwidth}
\setlength\figurewidth{0.42\textwidth}
%
%
%
%
\begin{tikzpicture}
\scalefont{0.7}
\begin{semilogyaxis}[%
width=\figurewidth,
height=\figureheight,
scale only axis,
every outer x axis line/.append style={darkgray!60!black},
every x tick label/.append style={font=\color{darkgray!60!black}},
xmin=-10, xmax=12,
xminorticks=true,
xlabel={$\gamma{}_{\text{th}}\text{(dB)}$},
xmajorgrids,
xminorgrids,
every outer y axis line/.append style={darkgray!60!black},
every y tick label/.append style={font=\color{darkgray!60!black}},
ymin=1e-10, ymax=0.1,
yminorticks=true,
ylabel={Outage Probability},
ymajorgrids,
yminorgrids,
grid style={solid},
legend style={at={(0.010856448509367,0.77188527036781)},anchor=south west,draw=darkgray!60!black,fill=white,align=left}]

\addplot [
color=black,
solid,
line width=1.0pt,
mark size=2.0pt,
mark=o,
mark options={solid},
]
coordinates{
 (-5,0)(-3,2e-06)(-1,4e-06)(1,1.6e-05)(3,5.2e-05)(5,0.000346)(7,0.001552)(9,0.00835)(11,0.036504) 
};
\addlegendentry{Naive MC};

\addplot [
color=black,
solid,
line width=1.0pt,
mark size=2.0pt,
mark=square,
mark options={solid},
]
coordinates{
 (-9,1.00852887607274e-09)(-7,6.37654131743731e-09)(-5,3.99532769770398e-08)(-3,2.44815873022387e-07)(-1,1.5244722526754e-06)(1,9.13395515742481e-06)(3,5.41909331030652e-05)(5,0.00030874862645797)(7,0.00167941286768508)(9,0.00834886892943891)(11,0.0363302112922912) 
};
\addlegendentry{HRT};

\addplot [
color=black,
solid,
line width=1.0pt,
mark size=2.8pt,
mark=diamond,
mark options={solid},
]
coordinates{
 (-9,1.0127215934268e-09)(-7,6.28947345928332e-09)(-5,3.88720313392177e-08)(-3,2.45849072375467e-07)(-1,1.50113328508925e-06)(1,9.35607738783062e-06)(3,5.43711952911368e-05)(5,0.000308709905836493)(7,0.00169711179723318)(9,0.00849460233392041)(11,0.0367268885628281) 
};
\addlegendentry{Proposed Estimator};

\addplot [
color=black,
dashed,
line width=1.0pt,
mark size=2.0pt,
mark=o,
mark options={solid},
forget plot
]
coordinates{
 (2,0)(4,4e-06)(6,6.4e-05)(8,0.000522)(10,0.0037)(12,0.021736) 
};
\addplot [
color=black,
dashed,
line width=1.0pt,
mark size=2.0pt,
mark=square,
mark options={solid},
forget plot
]
coordinates{
 (-4,8.45333775116754e-10)(-2,8.36308830188868e-09)(0,7.74890211013281e-08)(2,7.51164550077172e-07)(4,7.0636745420444e-06)(6,5.99929391399287e-05)(8,0.000505005083998958)(10,0.00356270891983513)(12,0.0219019418898135) 
};
\addplot [
color=black,
dashed,
line width=1.0pt,
mark size=2.8pt,
mark=diamond,
mark options={solid},
forget plot
]
coordinates{
 (-4,8.57853720199753e-10)(-2,8.48030648256826e-09)(0,7.96279141569371e-08)(2,7.27251937228643e-07)(4,6.94353001698618e-06)(6,6.10428491458308e-05)(8,0.000484315552387385)(10,0.00352914317464427)(12,0.0223496671570969) 
};
\addplot [
color=black,
dotted,
line width=1.0pt,
mark size=2.0pt,
mark=o,
mark options={solid},
forget plot
]
coordinates{
 (4,0)(6,4e-06)(8,5.6e-05)(10,0.000618)(12,0.005998) 
};
\addplot [
color=black,
dotted,
line width=1.0pt,
mark size=2.0pt,
mark=square,
mark options={solid},
forget plot
]
coordinates{
 (0,1.36561966570683e-09)(2,1.98572797111623e-08)(4,3.07211786773756e-07)(6,4.32899832546696e-06)(8,5.29236557871607e-05)(10,0.000614256232635961)(12,0.00607579188774949) 
};
\addplot [
color=black,
dotted,
line width=1.0pt,
mark size=2.8pt,
mark=diamond,
mark options={solid},
forget plot
]
coordinates{
 (0,1.3006645098157e-09)(2,2.27888749107663e-08)(4,3.02470946657774e-07)(6,3.70527027243265e-06)(8,5.51878674670409e-05)(10,0.000644281788641576)(12,0.00586894798709869) 
};
\end{semilogyaxis}
\end{tikzpicture}%
\caption{Outage Probability for $L=4,5,6$ branch EGC receiver with i.i.d Rayleigh fading channels as a function of $\gamma_{th}$. $L=4$ (solid line), $L=5$ (dashed line), and $L=6$ (dotted line). The system parameters are $E_s/N_0=1$ dB, $\Omega=10$ dB, and $M=5 \times 10^5$.}
\label{fig1}
\end{figure}

This figure reveals the failure of naive MC simulations. In fact, the naive estimator loses its accuracy when the value of $P_{out}$ decreases, i.e. in the region of rare events. Thus, more than $5\times 10^5$ samples are required in order for the naive sampler to retrieve a good level of accuracy. The opposite observation can be easily deduced regarding the accuracy of the proposed estimator and the HRT method. In fact, using the same number of simulation runs, these two estimators coincide perfectly and yield very accurate estimates of $P_{out}$ in the considered range of OP values. 

We now investigate the efficiency of these estimators in terms of the number of simulation runs needed to meet a fixed accuracy requirement. More precisely, we compute from (\ref{errmc}) and (\ref{erris}) the number of simulation runs needed to ensure that $\epsilon_{MC}=\epsilon_{IS}=\epsilon_{HRT}=5 \%$. Note that $\epsilon_{HRT}$ is given by a similar expression as in  (\ref{errmc}) and (\ref{erris}). In Fig. \ref{fig2}, we plot the number of samples needed by the naive MC simulation, the proposed method, and the HRT technique as a function of $\gamma_{th}$ and for the three values of $L$ as in Fig. \ref{fig1}. 

\begin{figure}[h]
\centering
\setlength\figureheight{0.30\textwidth}
\setlength\figurewidth{0.42\textwidth}
%
%
%
%
\begin{tikzpicture}
\scalefont{0.7}
\begin{semilogyaxis}[%
width=\figurewidth,
height=\figureheight,
scale only axis,
every outer x axis line/.append style={darkgray!60!black},
every x tick label/.append style={font=\color{darkgray!60!black}},
xmin=-10, xmax=12,
xminorticks=true,
xlabel={$\gamma{}_{\text{th}}\text{(dB)}$},
xmajorgrids,
xminorgrids,
every outer y axis line/.append style={darkgray!60!black},
every y tick label/.append style={font=\color{darkgray!60!black}},
ymin=10000, ymax=10000000000000,
yminorticks=true,
ylabel={Number of Simulation Runs},
ymajorgrids,
yminorgrids,
grid style={solid},
legend style={at={(0.620856448509367,0.77188527036781)},anchor=south west,draw=darkgray!60!black,fill=white,align=left}]

\addplot [
color=black,
solid,
line width=1.0pt,
mark size=2.0pt,
mark=o,
mark options={solid},
]
coordinates{
 (-9,1523645018905.16)(-7,240983303283.78)(-5,38460923730.7184)(-3,6276715659.14199)(-1,1007980076.21208)(1,168232264.984359)(3,28354498.4383678)(5,4975457.16116653)(7,913449.799348961)(9,182517.033975125)(11,40759.8384772957) 
};
\addlegendentry{Naive MC};
\addplot [
color=black,
solid,
line width=1.0pt,
mark size=2.0pt,
mark=square,
mark options={solid},
]
coordinates{
 (-9,44307.417373035)(-7,43841.7852814861)(-5,43802.7359630819)(-3,44327.6066612948)(-1,43232.8335235567)(1,43703.7126865678)(3,43350.1314927385)(5,43010.3728103533)(7,41944.0177016656)(9,41079.8827339142)(11,40762.4945980065) 
};
\addlegendentry{HRT};
\addplot [
color=black,
solid,
line width=1.0pt,
mark size=2.8pt,
mark=diamond,
mark options={solid},
]
coordinates{
 (-9,155231.676993088)(-7,154785.134472726)(-5,153398.129388972)(-3,145962.212109808)(-1,140011.171339785)(1,125964.441666048)(3,113653.992104534)(5,94443.5613798777)(7,68886.9624857965)(9,43242.5582700458)(11,22167.4736898439) 
};
\addlegendentry{Proposed Estimator};
\addplot [
color=black,
dashed,
line width=1.0pt,
mark size=2.0pt,
mark=square,
mark options={solid},
forget plot
]
coordinates{
 (-4,180469.493305362)(-2,174681.545959163)(0,182860.200054488)(2,176301.63448687)(4,170696.452048521)(6,175078.035386435)(8,166916.304674265)(10,168962.8291945)(12,168451.296027058) 
};
\addplot [
color=black,
dashed,
line width=1.0pt,
mark size=2.0pt,
mark=o,
mark options={solid},
forget plot
]
coordinates{
 (-4,1817790846566.84)(-2,183740734484.642)(0,19830420607.8161)(2,2045675405.70523)(4,217539629.909168)(6,25612144.2702803)(8,3041284.21208309)(10,429775.609913223)(12,68623.3489055811) 
};
\addplot [
color=black,
dashed,
line width=1.0pt,
mark size=2.8pt,
mark=diamond,
mark options={solid},
forget plot
]
coordinates{
 (-4,1291933.95733484)(-2,1210326.97940021)(0,1141798.97693496)(2,1032543.4869899)(4,800469.136307672)(6,568857.964681157)(8,346277.542821499)(10,157667.985325213)(12,50198.6970822329) 
};
\addplot [
color=black,
dotted,
line width=1.0pt,
mark size=2.0pt,
mark=square,
mark options={solid},
forget plot
]
coordinates{
 (0,856413.757008861)(2,885964.443760553)(4,822392.009013568)(6,795052.560939618)(8,819174.183310132)(10,790302.249234382)(12,757277.924962915) 
};
\addplot [
color=black,
dotted,
line width=1.0pt,
mark size=2.0pt,
mark=o,
mark options={solid},
forget plot
]
coordinates{
 (0,1125232769042.02)(2,77384213338.2813)(4,5001889882.10184)(6,354962795.630617)(8,29033494.6167181)(10,2500090.39828955)(12,251375.24842368) 
};
\addplot [
color=black,
dotted,
line width=1.0pt,
mark size=2.8pt,
mark=diamond,
mark options={solid},
forget plot
]
coordinates{
 (0,12803822.3010524)(2,9145148.31701287)(4,7531027.44162865)(6,5447552.26932609)(8,2355277.88651214)(10,816697.155270141)(12,198599.209139398) 
};
\end{semilogyaxis}
\end{tikzpicture}%
\caption{Number of simulation runs for $L=4,5,6$ branch EGC receiver with i.i.d Rayleigh fading channels as a function of $\gamma_{th}$. $L=4$ (solid line), $L=5$ (dashed line), and $L=6$ (dotted line). The system parameters are $E_s/N_0=1$ dB and $\Omega=10$ dB.}
\label{fig2}
\end{figure}
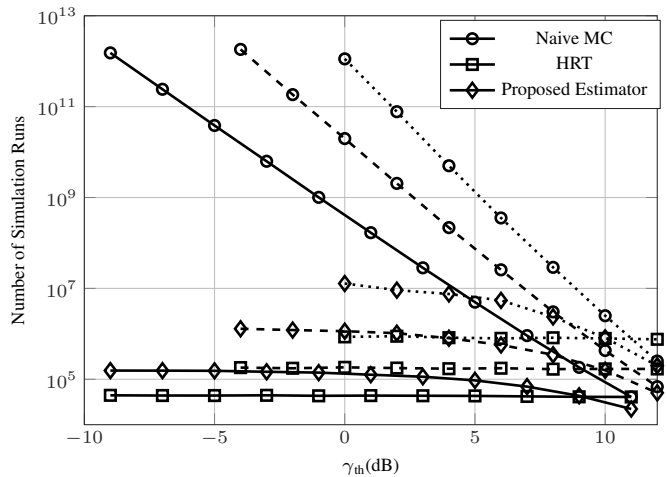

We first observe the high computational effort needed by naive MC simulations in order to achieve a $5\%$ relative error. In fact, the corresponding number of samples is increasing as we decrease the probability of interest $P_{out}$. On the other hand, the computational savings achieved by the proposed IS estimator and the HRT method is obvious and is clearly increasing as we decrease $P_{out}$. More specifically, while the number of samples needed by the naive sampler is increasing as we decrease $\gamma_{th}$, the proposed IS approach and the HRT method require numbers of runs that remain bounded independently of how small $P_{out}$ is. This observation is in accordance with Proposition 1 and the result proven in \cite{7328688} that  show that both estimators have bounded relative errors. For the sake of illustration, for $L=4$ and $\gamma_{th}=-9$ dB, the number of runs needed by naive MC simulation is approximately $1.5 \times 10^{12}$, whereas $1.5 \times 10^5$ and $5 \times 10^4$ samples are required by the proposed approach and the HRT estimator, respectively,  to ensure $5\%$ relative error.      

Note also that the HRT approach performs better than our proposed scheme for the considered values of $L$ and $\gamma_{th}$. Moreover, Fig. \ref{fig2} shows that increasing $L$ has negative effects on the performances of the proposed approach as well as the HRT method. However, this negative effect is more important for the former than the latter. For instance, the HRT approach requires $3.5$ (respectively $15$) times less number of samples than the proposed IS scheme when $P_{out}$ is of the order of $10^{-9}$ and $L=4$ (respectively $L=6$). 

Note however that the outperformance of the HRT approach over our proposed method does not tell the whole story and does not necessarily  exclude our proposed estimator from being a useful technique. In fact, the scope of applicability of our proposed estimator includes the interesting scenario of sums of correlated Rayleigh fading channels with exponential correlation that, to the best of our knowledge, has not been considered by other existing estimators. Moreover, the sum of i.i.d Rice constitutes another argument that shows the relevance of the proposed estimator. In fact, while the HRT estimator is proven to have bounded relative error for the sum of i.i.d Rice variates, it is not clear how sampling according to the HRT biased PDF is performed. On the other hand, we show in Section III-C how our approach can be easily implemented for the i.i.d Rice setting. The same argument holds for the sum of $\kappa-\mu$ RVs as well. Furthermore, our approach is applicable to the case of ordered sum of i.n.i.d Rayleigh RVs which has rarely been investigated. In the following subsections, we apply our proposed estimator to the case of the sum of exponentially correlated Rayleighs and the partial sum of ordered i.n.i.d Rayleighs and determine their computational efficiencies. Note that  we do not include simulations for the i.i.d Rician case  to avoid redundant information and conclusions.
\subsection{Correlated Rayleigh Fading Channels}
Here we consider the case of exponentially correlated Rayleigh fading channels and we aim to perform the same experiment as above. Note that we compare our estimator to only the naive MC method since we are not aware of any other existing estimator for the sum of correlated Rayleigh RVs. In Fig. \ref{fig3}, we plot the estimated value of $P_{out}$ given by the proposed estimator as well as the naive MC method as a function of the threshold and for three different values of $L$.  
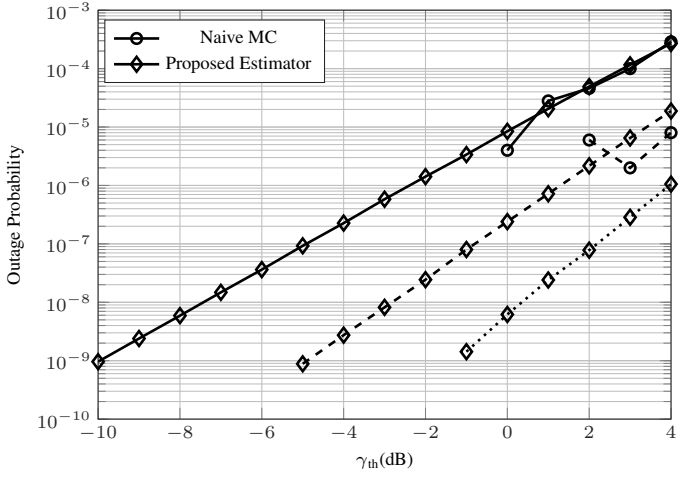
\begin{figure}[h]
\centering
\setlength\figureheight{0.30\textwidth}
\setlength\figurewidth{0.42\textwidth}
%
%
%
%
\begin{tikzpicture}
\scalefont{0.7}
\begin{semilogyaxis}[%
width=\figurewidth,
height=\figureheight,
scale only axis,
every outer x axis line/.append style={darkgray!60!black},
every x tick label/.append style={font=\color{darkgray!60!black}},
xmin=-10, xmax=4,
xminorticks=true,
xlabel={$\gamma{}_{\text{th}}\text{(dB)}$},
xmajorgrids,
xminorgrids,
every outer y axis line/.append style={darkgray!60!black},
every y tick label/.append style={font=\color{darkgray!60!black}},
ymin=1e-10, ymax=0.001,
yminorticks=true,
ylabel={Outage Probability},
ymajorgrids,
yminorgrids,
grid style={solid},
legend style={at={(0.010856448509367,0.82188527036781)},anchor=south west,draw=darkgray!60!black,fill=white,align=left}]
\addplot [
color=black,
solid,
line width=1.0pt,
mark size=2.0pt,
mark=o,
mark options={solid},
]
coordinates{
 (-1,0)(0,4e-06)(1,2.8e-05)(2,4.6e-05)(3,0.0001)(4,0.000288) 
};
\addlegendentry{Naive MC};
\addplot [
color=black,
solid,
line width=1.0pt,
mark size=2.8pt,
mark=diamond,
mark options={solid},
]
coordinates{
 (-10,9.63060136169647e-10)(-9,2.39688677051308e-09)(-8,5.90251098159467e-09)(-7,1.4790177151891e-08)(-6,3.62800915228527e-08)(-5,9.25679415949523e-08)(-4,2.2651829164106e-07)(-3,5.80565574644587e-07)(-2,1.41639756690316e-06)(-1,3.37628043668429e-06)(0,8.44378344863333e-06)(1,2.05460671400883e-05)(2,4.92159345987352e-05)(3,0.000115972555172553)(4,0.000269599117889254) 
};
\addlegendentry{Proposed Estimator};
\addplot [
color=black,
dashed,
line width=1.0pt,
mark size=2.0pt,
mark=o,
mark options={solid},
forget plot
]
coordinates{
 (1,0)(2,6e-06)(3,2e-06)(4,8e-06) 
};
\addplot [
color=black,
dashed,
line width=1.0pt,
mark size=2.8pt,
mark=diamond,
mark options={solid},
forget plot
]
coordinates{
(-5,8.8430e-10)(-4,2.73963084694895e-09)(-3,8.19155766350832e-09)(-2,2.44043764935373e-08)(-1,7.99557000514422e-08)(0,2.40066305879495e-07)(1,7.20122591365792e-07)(2,2.16632004220997e-06)(3,6.49560444340163e-06)(4,1.87958792498669e-05) 
};
\addplot [
color=black,
dotted,
line width=1.0pt,
mark size=2.8pt,
mark=diamond,
mark options={solid},
forget plot
]
coordinates{
 (-1,1.4267e-09)(0,6.1869210563632e-09)(1,2.40655585078144e-08)(2,7.80253887738591e-08)(3,2.85111174509602e-07)(4,1.0553606393579e-06) 
};
\end{semilogyaxis}
\end{tikzpicture}%
\caption{Outage Probability for $L=4,5,6$ branch EGC receiver with exponentially correlated Rayleigh fading channels as a function of $\gamma_{th}$. $L=4$ (solid line), $L=5$ (dashed line), and $L=6$ (dotted line). The system parameters are $E_s/N_0=1$ dB, $\sigma=\sqrt{5}$, $\rho=0.5$,  and $M=5 \times 10^5$.}
\label{fig3}
\end{figure}
The same conclusions can be drawn, as in the previous experiment, on the inability of naive MC simulations using $5\times 10^5$ samples to yield  a precise estimate in the region of small values of $P_{out}$. On the other side, this number of samples is sufficient for our estimator to provide an estimate of $P_{out}$ with a good level of accuracy.  

Next, we quantify the efficiency of the proposed approach with respect to naive MC simulations in terms of necessary number of simulation runs required to ensure a $5\%$ relative error. We plot this number in Fig. \ref{fig4} as a function of $\gamma_{th}$ using the three values of $L$. 
\begin{figure}[h]
\centering
\setlength\figureheight{0.30\textwidth}
\setlength\figurewidth{0.42\textwidth}
%
%
%
%
\begin{tikzpicture}
\scalefont{0.7}
\begin{semilogyaxis}[%
width=\figurewidth,
height=\figureheight,
scale only axis,
every outer x axis line/.append style={darkgray!60!black},
every x tick label/.append style={font=\color{darkgray!60!black}},
xmin=-10, xmax=4,
xminorticks=true,
xlabel={$\gamma{}_{\text{th}}\text{(dB)}$},
xmajorgrids,
xminorgrids,
every outer y axis line/.append style={darkgray!60!black},
every y tick label/.append style={font=\color{darkgray!60!black}},
ymin=10000, ymax=100000000000000,
yminorticks=true,
ylabel={Number of Simulation Runs},
ymajorgrids,
yminorgrids,
grid style={solid},
legend style={at={(0.630856448509367,0.84188527036781)},anchor=south west,draw=darkgray!60!black,fill=white,align=left}]
\addplot [
color=black,
solid,
line width=1.0pt,
mark size=2.0pt,
mark=o,
mark options={solid},
]
coordinates{
 (-10,1595580525876.36)(-9,641098284333.185)(-8,260336659384.717)(-7,103895981873.101)(-6,42354908153.4878)(-5,16600129929.2172)(-4,6783733184.59371)(-3,2646796804.68554)(-2,1084891600.64471)(-1,455126533.677842)(0,181983234.682945)(1,74788445.7699082)(2,31220871.5602801)(3,13248494.780914)(4,5698185.27315263) 
};
\addlegendentry{Naive MC};
\addplot [
color=black,
solid,
line width=1.0pt,
mark size=2.8pt,
mark=diamond,
mark options={solid},
]
coordinates{
 (-10,153086.326666157)(-9,152993.030427176)(-8,154151.611563668)(-7,152127.664255819)(-6,152775.776417213)(-5,146702.125330123)(-4,146018.866174211)(-3,137601.628685083)(-2,134956.7903335)(-1,133850.411842491)(0,124541.718698672)(1,116903.550642947)(2,108917.86190601)(3,100268.251142867)(4,90378.8416315247) 
};
\addlegendentry{Proposed Estimator};
\addplot [
color=black,
dashed,
line width=1.0pt,
mark size=2.0pt,
mark=o,
mark options={solid},
forget plot
]
coordinates{
 (-5,1.7377e12)(-4,560893084373.571)(-3,187588252507.571)(-2,62965753823.1385)(-1,19218640774.1815)(0,6400896725.07333)(1,2133857362.41994)(2,709330404.190125)(3,236564592.561563)(4,81752553.1566176) 
};
\addplot [
color=black,
dashed,
line width=1.0pt,
mark size=2.8pt,
mark=diamond,
mark options={solid},
forget plot
]
coordinates{
 (-5,1.2180e+06)(-4,1193364.65807684)(-3,1198965.75792406)(-2,1193364.65807521)(-1,1062620.60989419)(0,1012080.15988964)(1,941189.168760295)(2,846500.373084454)(3,735816.079234385)(4,633439.8334921) 
};
\addplot [
color=black,
dotted,
line width=1.0pt,
mark size=2.0pt,
mark=o,
mark options={solid},
forget plot
]
coordinates{
 (-1,1.0771e12)(0,248369096113.245)(1,63852246043.6948)(2,19694100910.6242)(3,5389615347.52127)(4,1456031541.24219) 
};
\addplot [
color=black,
dotted,
line width=1.0pt,
mark size=2.8pt,
mark=diamond,
mark options={solid},
forget plot
]
coordinates{
 (-1,1.1297e07)(0,9145148.31690911)(1,8001812.69705405)(2,8086058.47946706)(3,6920279.00238295)(4,5525956.42635132) 
};
\end{semilogyaxis}
\end{tikzpicture}%
\caption{Number of simulation runs for $L=4,5,6$ branch EGC receiver with exponentially correlated Rayleigh fading channels as a function of $\gamma_{th}$. $L=4$ (solid line), $L=5$ (dashed line), and $L=6$ (dotted line). The system parameters are $E_s/N_0=1$ dB, $\sigma=\sqrt{5}$, and $\rho=0.5$.}
\label{fig4}
\end{figure}
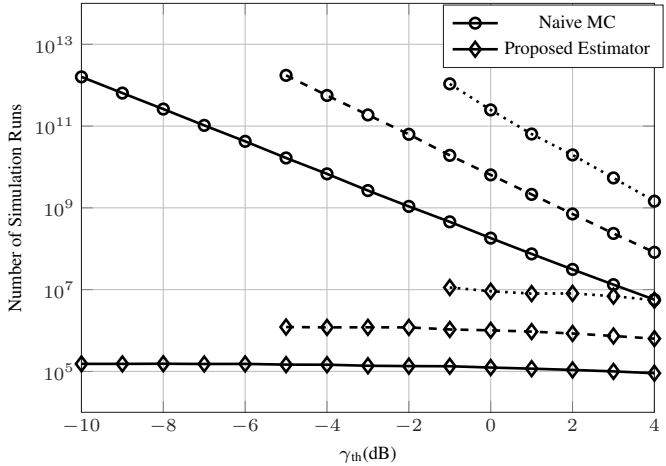
We observe the clear outperformance of our proposed estimator compared to the naive MC sampler. In fact, contrary to the naive MC sampler, which requires a number of runs that keeps increasing as we decrease the OP values, the number of runs needed by our proposed estimator remains bounded, regardless of how much smaller $P_{out}$ is. This is in agreement with the result we have proven in Proposition 2. For example, approximately $10^6$ simulation runs are needed by our proposed IS estimator when $L=5$ and $\gamma_{th}$ is less than $-2$ dB. On the other hand, the naive MC sampler requires approximately $10^{11}$ runs (respectively more than $10^{12}$) for the same value of $L$ and when $\gamma_{th}=-2$ dB (respectively when $\gamma_{th}=-5$ dB). 

\subsection{i.n.i.d Ordered Rayleigh Fading Channels}
In the last experiment, we aim to estimate the OP values at the output of GSC/EGC receivers when operating over i.n.i.d Rayleigh fading channels. In Fig. \ref{fig5}, we plot the values of $P_{out}$ as a function of the threshold for different values of $N$ and $L$. 

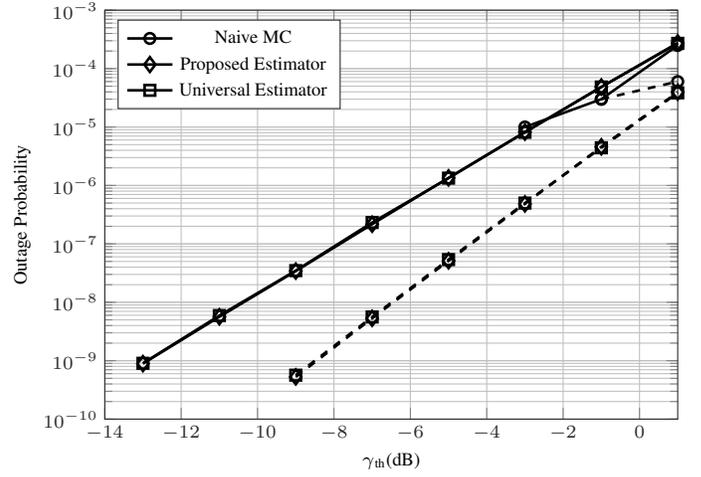
\begin{figure}[h]
\centering
\setlength\figureheight{0.30\textwidth}
\setlength\figurewidth{0.42\textwidth}
%
%
%
%
\begin{tikzpicture}
\scalefont{0.7}
\begin{semilogyaxis}[%
width=\figurewidth,
height=\figureheight,
scale only axis,
every outer x axis line/.append style={darkgray!60!black},
every x tick label/.append style={font=\color{darkgray!60!black}},
xmin=-14, xmax=1,
xminorticks=true,
xlabel={$\gamma{}_{\text{th}}\text{(dB)}$},
xmajorgrids,
xminorgrids,
every outer y axis line/.append style={darkgray!60!black},
every y tick label/.append style={font=\color{darkgray!60!black}},
ymin=1e-10, ymax=0.001,
yminorticks=true,
ylabel={Outage Probability},
ymajorgrids,
yminorgrids,
grid style={solid},
legend style={at={(0.022362040554793,0.758143318430204)},anchor=south west,draw=darkgray!60!black,fill=white,align=left}]
\addplot [
color=black,
solid,
line width=1.0pt,
mark size=2.0pt,
mark=o,
mark options={solid},
]
coordinates{
 (-5,0)(-3,1e-05)(-1,3e-05)(1,0.00025) 
};
\addlegendentry{Naive MC};

\addplot [
color=black,
solid,
line width=1.0pt,
mark size=2.8pt,
mark=diamond,
mark options={solid},
]
coordinates{
 (-13,8.91422814803987e-10)(-11,5.64295875902241e-09)(-9,3.42260694065089e-08)(-7,2.17048623875479e-07)(-5,1.34951160932695e-06)(-3,8.19657515775366e-06)(-1,4.84956967220715e-05)(1,0.000275562229198692) 
};
\addlegendentry{Proposed Estimator};

\addplot [
color=black,
solid,
line width=1.0pt,
mark size=2.0pt,
mark=square,
mark options={solid},
]
coordinates{
 (-13,9.03292451876584e-10)(-11,5.92940716292287e-09)(-9,3.48528678505885e-08)(-7,2.32185045708126e-07)(-5,1.33483458175161e-06)(-3,8.12093723450247e-06)(-1,4.84619796484098e-05)(1,0.000270297934775975) 
};
\addlegendentry{Universal Estimator};
\addplot [
color=black,
dashed,
line width=1.0pt,
mark size=2.0pt,
mark=o,
mark options={solid},
forget plot
]
coordinates{
 (-3,0)(-1,3e-05)(1,6e-05) 
};
\addplot [
color=black,
dashed,
line width=1.0pt,
mark size=2.8pt,
mark=diamond,
mark options={solid},
forget plot
]
coordinates{
 (-9,5.28704321381595e-10)(-7,5.31907639952821e-09)(-5,5.12653081635388e-08)(-3,4.83110831920424e-07)(-1,4.58230747777317e-06)(1,3.94996050396027e-05) 
};
\addplot [
color=black,
dashed,
line width=1.0pt,
mark size=2.0pt,
mark=square,
mark options={solid},
forget plot
]
coordinates{
 (-9,5.64868122309601e-10)(-7,5.60441522032189e-09)(-5,5.38494777173806e-08)(-3,4.99223222308096e-07)(-1,4.41414928532661e-06)(1,3.84582668354301e-05) 
};
\end{semilogyaxis}
\end{tikzpicture}%
\caption{Outage Probability at the output of GSC/EGC receiver with i.n.i.d Rayleigh fading channels as a function of $\gamma_{th}$. $(N,L)=(2,4)$ (solid line) with $\bold{\Omega}=(5,5,8,8)^t$ dB. $(N,L)=(2,5)$ (dashed line) with $\bold{\Omega}=(5,5,5,8,8)^t$ dB. The system parameters are $E_s/N_0=1$ dB, and $M=10^5$.}
\label{fig5}
\end{figure}

Our proposed estimator and the universal estimator yield precise estimates of $P_{out}$ for all values of $\gamma_{th}$ using $10^5$ samples, whereas the failure of the naive MC sampler is evident because it is unable to provide a non-zero estimate when the event is rare. 

\begin{figure}[h]
\centering
\setlength\figureheight{0.30\textwidth}
\setlength\figurewidth{0.42\textwidth}
%
%
%
%
\begin{tikzpicture}
\scalefont{0.7}
\begin{semilogyaxis}[%
width=\figurewidth,
height=\figureheight,
scale only axis,
every outer x axis line/.append style={darkgray!60!black},
every x tick label/.append style={font=\color{darkgray!60!black}},
xmin=-14, xmax=1,
xminorticks=true,
xlabel={$\gamma{}_{\text{th}}\text{(dB)}$},
xmajorgrids,
xminorgrids,
every outer y axis line/.append style={darkgray!60!black},
every y tick label/.append style={font=\color{darkgray!60!black}},
ymin=10000, ymax=10000000000000,
yminorticks=true,
ylabel={Number of Simulation Runs},
ymajorgrids,
yminorgrids,
grid style={solid},
legend style={draw=darkgray!60!black,fill=white,align=left}]
\addplot [
color=black,
solid,
line width=1.0pt,
mark size=2.0pt,
mark=o,
mark options={solid},
]
coordinates{
 (-13,1723806002169.79)(-11,272311044072.74)(-9,44896769452.4287)(-7,7079702414.31236)(-5,1138662250.59958)(-3,187471886.152493)(-1,31684573.7546289)(1,5574844.435405) 
};
\addlegendentry{Naive MC};
\addplot [
color=black,
solid,
line width=1.0pt,
mark size=2.8pt,
mark=diamond,
mark options={solid},
]
coordinates{
 (-13,19773.2235206194)(-11,19447.3043023432)(-9,19877.1596634511)(-7,19131.2043046491)(-5,18453.657714662)(-3,17714.8526083134)(-1,16687.4357927662)(1,15255.557140865) 
};
\addlegendentry{Proposed Estimator};

\addplot [
color=black,
solid,
line width=1.0pt,
mark size=2.0pt,
mark=square,
mark options={solid},
]
coordinates{
 (-13,163872.964283961)(-11,153995.26432508)(-9,159707.34951846)(-7,143021.704610294)(-5,143979.951314661)(-3,130251.969552248)(-1,111286.792985393)(1,90644.2304495018) 
};
\addlegendentry{Universal Estimator};

\addplot [
color=black,
dashed,
line width=1.0pt,
mark size=2.0pt,
mark=o,
mark options={solid},
forget plot
]
coordinates{
 (-9,2906426024988.92)(-7,288892258054.95)(-5,29974264785.8806)(-3,3180717872.8)(-1,335340429.706342)(1,38901130.8286836) 
};
\addplot [
color=black,
dashed,
line width=1.0pt,
mark size=2.8pt,
mark=diamond,
mark options={solid},
forget plot
]
coordinates{
 (-9,41639.9494818232)(-7,39726.9151638824)(-5,38700.4581855239)(-3,37179.7811805164)(-1,33458.9120794342)(1,30370.6404505911) 
};
\addplot [
color=black,
dashed,
line width=1.0pt,
mark size=2.0pt,
mark=square,
mark options={solid},
forget plot
]
coordinates{
 (-9,608247.220249877)(-7,569710.692794717)(-5,528344.505513877)(-3,475685.508159305)(-1,406064.237618579)(1,302150.175968141) 
};
\end{semilogyaxis}
\end{tikzpicture}%
\caption{Number of simulation runs for GSC/EGC receiver with i.n.i.d Rayleigh fading channels as a function of $\gamma_{th}$. $(N,L)=(2,4)$ (solid line) with $\bold{\Omega}=(5,5,8,8)^t$ dB. $(N,L)=(2,5)$ (dashed line) with $\bold{\Omega}=(5,5,5,8,8)^t$ dB. The system parameters are $E_s/N_0=1$ dB.}
\label{fig6}
\end{figure}
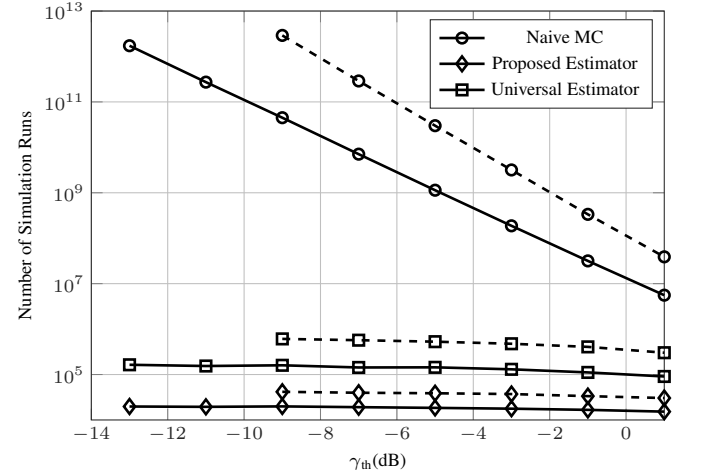
We investigate the efficiency of these estimators in Fig. \ref{fig6} using the necessary number of runs needed in order to obtain $5\%$ relative error. 
As the event of interest becomes rarer and rarer, the number of samples needed by the naive sampler rapidly increases (Fig. \ref{fig6}).. However, the bounded relative error property that our proposed estimator and the universal estimators enjoy is validated in Fig. \ref{fig6}. As expected, our proposed estimator outperforms the universal estimator. Note also that the efficiency of our proposed estimator increases with increasing L, unlike the universal estimator. For example, our estimator requires approximately $8$ (respectively $8 \times 10^7$) times  less number of simulations compared to the universal estimator (respectively the naive MC sampler) when $(N,L)=(2,4)$ and $\gamma_{th}=-13$ dB. However, when $(N,L)=(2,5)$ and $\gamma_{th}=-9$ dB, our proposed estimator is approximately $15$ times more efficient than the universal estimator.

\section{Conclusion}
We developed an importance sampling estimator for the estimation of the outage probability at the output of equal gain combining receivers. Our proposed biased probability density function is the truncation of the underlying one over the multidimensional sphere with a radius given by the specified threshold. Our method is based on the perfect knowledge of a closed-form expression of the outage probability with maximum ratio combining receivers. This assumption is not restrictive since it holds for various challenging fading models. We extended our approach to the case of generalised selection combining receivers combined with equal gain combining technique for  independent Rayleigh fading channels. We proved  that our proposed estimator has bounded relative error for four interesting fading channels. This study represents a valuable contribution to the field of variance reduction techniques. Finally, we tested the performance of our proposed estimator through various simulations. 
\bibliography{References}

\begin{thebibliography}{10}
\providecommand{\url}[1]{#1}
\csname url@samestyle\endcsname
\providecommand{\newblock}{\relax}
\providecommand{\bibinfo}[2]{#2}
\providecommand{\BIBentrySTDinterwordspacing}{\spaceskip=0pt\relax}
\providecommand{\BIBentryALTinterwordstretchfactor}{4}
\providecommand{\BIBentryALTinterwordspacing}{\spaceskip=\fontdimen2\font plus
\BIBentryALTinterwordstretchfactor\fontdimen3\font minus
  \fontdimen4\font\relax}
\providecommand{\BIBforeignlanguage}[2]{{%
\expandafter\ifx\csname l@#1\endcsname\relax
\typeout{** WARNING: IEEEtran.bst: No hyphenation pattern has been}%
\typeout{** loaded for the language `#1'. Using the pattern for}%
\typeout{** the default language instead.}%
\else
\language=\csname l@#1\endcsname
\fi
#2}}
\providecommand{\BIBdecl}{\relax}
\BIBdecl

\bibitem{alouini}
{M.K. Simon and M.-S. Alouini}, \emph{Digital Communication over Fading
  Channels, 2nd Edition}.\hskip 1em plus 0.5em minus 0.4em\relax New York:
  Wiley, 2005.

\bibitem{5671659}
{F. Yilmaz and M.-S. Alouini}, ``An {MGF}-based capacity analysis of equal gain
  combining over fading channels,'' in \emph{in Proc. of the IEEE International
  Symposium on Personal, Indoor and Mobile Radio Communications}, Sept 2010,
  pp. 945--950.

\bibitem{8472928}
N.~{Ben Rached}, Z.~Botev, A.~Kammoun, M.-S. Alouini, and R.~Tempone, ``On the
  sum of order statistics and applications to wireless communication systems
  performances,'' \emph{IEEE Transactions on Wireless Communications}, vol.~17,
  no.~11, pp. 7801--7813, Nov. 2018.

\bibitem{1244789}
X.~Qi, M.-S. Alouini, and Y.-C. Ko, ``Closed-form analysis of dual-diversity
  equal-gain combining over {R}ayleigh fading channels,'' \emph{IEEE
  Transactions on Wireless Communications}, vol.~2, no.~6, pp. 1120--1125, Nov.
  2003.

\bibitem{1388730}
J.~Hu and N.~C. Beaulieu, ``Accurate closed-form approximations to {R}icean sum
  distributions and densities,'' \emph{IEEE Communications Letters}, vol.~9,
  no.~2, pp. 133--135, Feb. 2005.

\bibitem{4781943}
J.~A. Lopez-Salcedo, ``Simple closed-form approximation to {R}icean sum
  distributions,'' \emph{IEEE Signal Processing Letters}, vol.~16, no.~3, pp.
  153--155, Mar. 2009.

\bibitem{1421185}
J.~C. S.~S. Filho and M.~D. Yacoub, ``Highly accurate $\kappa- \mu$
  approximation to sum of m independent non-identical {R}icean variates,''
  \emph{Electronics Letters}, vol.~41, no.~6, pp. 338--339, Mar. 2005.

\bibitem{1388722}
J.~Hu and N.~C. Beaulieu, ``Accurate simple closed-form approximations to
  {R}ayleigh sum distributions and densities,'' \emph{IEEE Communications
  Letters}, vol.~9, no.~2, pp. 109--111, Feb. 2005.

\bibitem{4939219}
D.~B.~D. Costa and M.~D. Yacoub, ``Accurate approximations to the sum of
  generalized random variables and applications in the performance analysis of
  diversity systems,'' \emph{IEEE Transactions on Communications}, vol.~57,
  no.~5, pp. 1271--1274, May. 2009.

\bibitem{4275022}
N.~B. Mehta, J.~Wu, A.~F. Molisch, and J.~Zhang, ``Approximating a sum of
  random variables with a {L}ognormal,'' \emph{IEEE Transactions on Wireless
  Communications}, vol.~6, no.~7, pp. 2690--2699, Jul. 2007.

\bibitem{1275712}
N.~C. Beaulieu and Q.~Xie, ``An optimal {L}ognormal approximation to
  {L}ognormal sum distributions,'' \emph{IEEE Transactions on Vehicular
  Technology}, vol.~53, no.~2, pp. 479--489, Mar. 2004.

\bibitem{4814351}
M.~D. Renzo, F.~Graziosi, and F.~Santucci, ``Further results on the
  approximation of {L}og-normal power sum via {P}earson type {IV} distribution:
  a general formula for log-moments computation,'' \emph{IEEE Transactions on
  Communications}, vol.~57, no.~4, pp. 893--898, Apr. 2009.

\bibitem{1097606}
L.~Fenton, ``The sum of {L}og-normal probability distributions in scatter
  transmission systems,'' \emph{IRE Transactions on Communications Systems},
  vol.~8, no.~1, pp. 57--67, Mar. 1960.

\bibitem{citeulike:7151841}
S.~C. Schwartz and Y.~S. Yeh, ``{On the distribution function and moments of
  power sums with {L}ognormal component.}'' \emph{The Bell Systems Technical
  Journal}, 1982.

\bibitem{rubino2009rare}
G.~Rubino and B.~Tuffin, \emph{Rare Event Simulation using {M}onte {C}arlo
  Methods}.\hskip 1em plus 0.5em minus 0.4em\relax Wiley, 2009.

\bibitem{opac-b1132466}
D.~P. Kroese, T.~Taimre, and Z.~I. Botev, \emph{Handbook of Monte Carlo
  methods}.\hskip 1em plus 0.5em minus 0.4em\relax N.J: Wiley, 2011.

\bibitem{asmussen2014exponential}
S.~Asmussen, J.~L. Jensen, and L.~Rojas-Nandayapa, ``Exponential family
  techniques for the {L}ognormal left tail,'' \emph{Scandinavian Journal of
  Statistics}, vol.~43, no.~3, pp. 774--787.

\bibitem{gulisashvili2016}
A.~Gulisashvili and P.~Tankov, ``Tail behavior of sums and differences of
  {L}og-normal random variables,'' \emph{Bernoulli}, vol.~22, no.~1, pp.
  444--493, Feb. 2016.

\bibitem{botev_SLN}
Z.~Botev, R.~Salomone, and D.~MacKinlay, ``Accurate computation of the
  distribution of sums of dependent {L}og-normals with applications to the
  {B}lack-{S}choles model,'' \emph{arXiv preprint arXiv:1705.03196}, 2017.

\bibitem{Nadhir_SLN}
M.-S. Alouini, N.~{Ben Rached}, A.~Kammoun, and R.~Tempone, ``On the efficient
  simulation of the left-tail of the sum of correlated {L}og-normal variates,''
  \emph{Monte Carlo Methods and Applications}, Mar. 2018.

\bibitem{7835220}
C.~{Ben Issaid}, N.~{Ben Rached}, A.~Kammoun, M.~S. Alouini, and R.~Tempone,
  ``On the efficient simulation of the distribution of the sum of
  {G}amma-{G}amma variates with application to the outage probability
  evaluation over fading channels,'' \emph{IEEE Transactions on
  Communications}, vol.~65, no.~4, pp. 1839--1848, Apr. 2017.

\bibitem{8125718}
C.~{Ben Issaid}, M.-S. Alouini, and R.~Tempone, ``On the fast and precise
  evaluation of the outage probability of diversity receivers over
  $\alpha-\mu$, $\kappa-\mu$, and $\eta-\mu$ fading channels.'' \emph{IEEE
  Transactions on Wireless Communications}, vol.~17, no.~2, pp. 1255--1268,
  Feb. 2018.

\bibitem{7328688}
{N. Ben Rached and A. Kammoun and M.-S. Alouini and R. Tempone}, ``Unified
  importance sampling schemes for efficient simulation of outage capacity over
  generalized fading channels,'' \emph{IEEE Journal of Selected Topics in
  Signal Processing}, vol.~10, no.~2, pp. 376--388, Mar. 2016.

\bibitem{Juneja:2002:SHT:566392.566394}
S.~Juneja and P.~Shahabuddin, ``Simulating heavy tailed processes using delayed
  hazard rate twisting,'' \emph{ACM Trans. Model. Comput. Simul.}, vol.~12,
  no.~2, pp. 94--118, {A}pr. 2002.

\bibitem{BenRached2016}
N.~Ben~Rached, F.~Benkhelifa, A.~Kammoun, M.-S. Alouini, and R.~Tempone, ``On
  the generalization of the hazard rate twisting-based simulation approach,''
  \emph{Statistics and Computing}, Nov. 2016.

\bibitem{Botev:2013:SNR:2466677.2466683}
Z.~I. Botev, P.~L'Ecuyer, G.~Rubino, R.~Simard, and B.~Tuffin, ``Static network
  reliability estimation via generalized splitting,'' \emph{INFORMS J. on
  Computing}, vol.~25, no.~1, pp. 56--71, Jan. 2013.

\bibitem{1201072}
R.~K. Mallik, ``On multivariate {R}ayleigh and exponential distributions,''
  \emph{IEEE Transactions on Information Theory}, vol.~49, no.~6, pp.
  1499--1515, Jun. 2003.

\bibitem{4231253}
M.~D. Yacoub, ``The $\kappa-\mu$ distribution and the $\eta-\mu$
  distribution,'' \emph{IEEE Antennas and Propagation Magazine}, vol.~49,
  no.~1, pp. 68--81, Feb. 2007.

\bibitem{1033001}
M.~K. Simon and M.-S. Alouini, ``A compact performance analysis of generalized
  selection combining with independent but nonidentically distributed
  {R}ayleigh fading paths,'' \emph{IEEE Transactions on Communications},
  vol.~50, no.~9, pp. 1409--1412, Sep. 2002.

\bibitem{opac-b1123521}
S.~Asmussen and P.~W. Glynn, \emph{Stochastic simulation : {A}lgorithms and
  {A}nalysis}, ser. Stochastic modelling and applied probability.\hskip 1em
  plus 0.5em minus 0.4em\relax New York: Springer, 2007.

\bibitem{6292935}
I.~S. Ansari, F.~Yilmaz, M.-S. Alouini, and O.~Kucur, ``On the sum of {G}amma
  random variates with application to the performance of maximal ratio
  combining over {N}akagami-m fading channels,'' in \emph{in Proc. of the IEEE
  13th International Workshop on Signal Processing Advances in Wireless
  Communications (SPAWC)}, Jun. 2012, pp. 394--398.

\bibitem{gradshteyn2007}
I.~S. Gradshteyn and I.~M. Ryzhik, \emph{Table of integrals, series, and
  products}, 7th~ed.\hskip 1em plus 0.5em minus 0.4em\relax Elsevier/Academic
  Press, Amsterdam, 2007.

\bibitem{1221771}
G.~K. Karagiannidis, D.~A. Zogas, and S.~A. Kotsopoulos, ``On the multivariate
  {N}akagami-m distribution with exponential correlation,'' \emph{IEEE
  Transactions on Communications}, vol.~51, no.~8, pp. 1240--1244, Aug. 2003.

\bibitem{4570452}
N.~Y. Ermolova, ``Moment generating functions of the generalized $\eta-\mu$ and
  $\kappa-\mu$ distributions and their applications to performance evaluations
  of communication systems,'' \emph{IEEE Communications Letters}, vol.~12,
  no.~7, pp. 502--504, Jul. 2008.

\bibitem{andras2011generalized}
S.~Andr{\'a}s, A.~Baricz, and Y.~Sun, ``The generalized {M}arcum {Q}- function:
  an orthogonal polynomial approach,'' \emph{Acta Universitatis Sapientiae
  Mathematica}, vol.~3, no.~1, pp. 60--76, 2011.

\bibitem{7769235}
N.~{Ben Rached}, A.~Kammoun, M.-S. Alouini, and R.~Tempone, ``A unified
  moment-based approach for the evaluation of the outage probability with noise
  and interference,'' \emph{IEEE Transactions on Wireless Communications},
  vol.~16, no.~2, pp. 1012--1023, Feb. 2017.

\bibitem{5605378}
S.~S. Nam, M.-S. Alouini, and H.~C. Yang, ``An {MGF}-based unified framework to
  determine the joint statistics of partial sums of ordered random variables,''
  \emph{IEEE Transactions on Information Theory}, vol.~56, no.~11, pp.
  5655--5672, Nov. 2010.

\bibitem{7953495}
S.~S. Nam, Y.~C. Ko, and M.-S. Alouini, ``New closed-form results on ordered
  statistics of partial sums of {G}amma random variables and its application to
  performance evaluation in the presence of {N}akagami fading,'' \emph{IEEE
  Access}, vol.~5, pp. 12\,820--12\,832, 2017.

\end{thebibliography}
\bibliographystyle{IEEEtran}
\end{document}